\documentclass[12pt, a4paper]{extarticle}
\usepackage[margin=2cm]{geometry}
\usepackage{authblk}
\usepackage{titling}

\usepackage[english]{babel}
\usepackage{graphicx} % Required for inserting images
\usepackage{wrapfig}
\usepackage{amsmath,amsthm,amssymb,scrextend}
\usepackage{booktabs}            % professional-quality tables
\usepackage{multirow}            % tabular cells spanning multiple rows
\usepackage{amsfonts}            % blackboard math symbols
\usepackage{multicol}
\usepackage[shortlabels]{enumitem}
\usepackage{cutwin}
\usepackage{float}
\usepackage{csquotes}
\usepackage{enumitem}
\usepackage{tikz}
\usepackage{hyperref}

\def \rR {\mathbb{R}}
\def \bf#1 {\textbf{#1 }}
\def \sumt {\sum\limits}
\providecommand{\keywords}[1]
{
  \small	
  \textbf{\textit{Keywords:}} #1
}

\renewenvironment{proof}{\begin{addmargin}[1em]{0em}\begin{newproof}}{\end{newproof}\end{addmargin}\qed}
\newtheorem{thm}{Theorem}
\newtheorem*{thm*}{Theorem}
\newtheorem{lm}{Lemma}
\newtheorem{cor}{Corollary}
\newtheorem*{lm*}{Lemma}
\newtheorem{defin}{Definition}
\newtheorem*{defin*}{Definition}

\def \sumt {\sum\limits}

\def \cN {\mathcal{N}}

\def \l {\lambda}

\def \rR {\mathbb{R}}
\newcommand{\mint}{\min\limits}

\begin{document}

\title{Capability centrality: the next step from scale-free property}
\author{Mikhail Tuzhilin$^1$%
  \affil{$^1$HSE university, Moscow State University}
}
\date{}
\maketitle

\begin{abstract}
In this article we present a new centrality measure called ksi-centrality. We show that ksi-centrality distinguishes real networks from random ones, similar to degree centrality: the ksi-centrality distribution is right-skewed for real networks and centered for random Erdos-Rényi networks, and has linear pattern with a heavy tail on a log plot. Furthermore, the ksi-centrality distribution is centered for models simulating real networks: Barabasi-Albert, Watts-Strogatz, and Boccaletti-Hwang-Latora. Thus, this centrality distribution is an additional and independent property with respect to scale-freeness. We also introduce a normalized version of ksi-centrality and show that it is related to algebraic connectivity and the Chegeer's value of a network. Moreover, the average value of this normalized centrality is in bijective correspondence with the relative number of edges that a new node connects to others in the Barabasi-Albert preferential attachment model, thus answering the question of how to choose the parameter $m$ to model a given real-world network.

\end{abstract}

\keywords{Centralities, scale-free networks, models of real networks}

%------------------------------------------------------------------

\section{Introduction}

Centrality measures play the key role in the networks analysis starting with the famous experiment of two social psychologists Stanley Milgram and Jeffrey Travers in 1969 year~\cite{MT}. They discovered the six degrees of separation law, which formed the basis of the small-world theory for real networks~\cite{Watts}. In 1999, Albert, Zhong, and Barabasi discovered another property of real networks—the distribution of nodes degrees—that distinguishes real networks from random ones~\cite{BA1}. They found that for real networks the node degree distribution is right-skewed, almost linear on the log-log plot, and follows a power law, while for random networks it is central and follows a binomial law. While the question whether the degree distribution of real networks satisfy the power law is currently quite controversial~\cite{Notrare}, the main difference between real and random networks remains the same: the skewness of this distribution is different for real and random networks. In addition, A. Broido and A. Clauset showed that the Weibull distribution is better suited to describe the distribution of nodes degrees in real networks~\cite{Notrare}.

In this paper, we found that Pearson’s moment coefficients
of skewness can be used as a good identifier to distinguish real networks from artificial ones in terms of distributions. A simple threshold of 1 accomplishes this task. It works for degree distribution and a new centrality distribution we introduced called ksi-centrality. To reach this conclusion, we analyzed 40 networks based on real-world data from various fields: biological, social, internet, transportation, and so on. These networks varied in size, structure, density, and other parameters. It turned out that the ksi-centrality distributions for all these networks are right-skewed, close to linear, with a heavy tail on the logarithmic plot, and are well approximated by the Weibull distribution.

Another important question in real-world network science is how to model the properties of real-world networks. In 1998, Watts and Strogatz proposed a model for simulating the small-world property, but it turned out that this model does not satisfy the scale-free condition~\cite{Watts}. In contrast, in 1999, Barabasi and Albert proposed a model that satisfies the scale-free condition but does not satisfy the small-world property~\cite{BATWO}. Finally, in 2007, S. Boccaletti, D-U. Hwang, and V. Latora. proposed a suitable model that is both small-world and scale-free~\cite{BHL}. In this article we analyzed the ksi distributions for these three models. It turned out that the ksi distributions for all of these models are also centered for most parameter values, as for the random Erdos-Renyi model. Therefore, ksi distribution is not only independent of the shape of degree distribution but also provides us with an additional property of real networks that distinguishes them from these models. In contrast, a centrality similar to ksi-centrality --- average neighbor degree --- depends on the degree distribution and therefore lacks the same properties for distinguishing networks. We present this comparison in Section~\ref{ksiVSand}. In addition, normalized version of ksi centrality has connections with algebraic connectivity, the Chegeer’s value for a network, and determines the $m$ parameter for the Barabasi-Albert model simulating a given real network.

\section{Ksi-centrality and ksi-coefficient}

Consider a network of you and your friends, so-called an ego network. Let's ask a question: what is a measure of a capability to expand the circle of your acquaintances? The word "capability" is a combination of two words: "capacity" and "ability", therefore the meaning of this question is to determine a capacity of your own ability to expand your ego network. The simplest way to do that is to follow the rule: a friend of a friend is a friend of mine --- ask a friend to introduce me to a friend of his whom I don’t know yet. Thus, this capability is the average number of friends of my friends, excluding themselves. We call this capability a ksi-centrality of a node. More formally, consider a connected undirected network $G$ with $n$ nodes. Let $\cN(i)$ be the neighborhood of vertex $i$, and for any two disjoint subsets of nodes $H, K\subset V(G)$ let's denote the number of edges with one end in $H$ and another in $K$ by $E(H, K) = \big|{(v,w): v\in H,\ w\in K}\big|$.

\begin{defin}
    Ksi-centrality $\xi_{i}$ of a node $i$ is the relation of the total number of neighbors of $i$'s neighbors excluding themselves divided by the total number of neighbors of $i$:
    $$  
        \xi_i = \xi(i) = \frac {\Big|E\big(\cN(i), V\setminus \cN(i)\big)\Big|} {\big|\cN(i)\big|} = \frac {\Big|E\big(\cN(i), V\setminus \cN(i)\big)\Big|} {d_i},
    $$ where $d_i$ is a degree of the vertex $i$.
\end{defin}

Since by this definition the vertex $i\in V\setminus \cN(i)$, in the sum we calculate vertex $i$ $d_i$ times. Therefore, more precisely, this capability is equal to $\xi_i-1$. Thus, it differs from ksi by a constant, but this will not be important, since further we will consider distributions of these quantities. It is easy to see that ksi-centrality is bounded by $1$ and $n-d_i$. Let's also define the normalized version of this centrality, called normalized ksi. 

\begin{defin}
    Normalized ksi-centrality $\hat\xi_{i}$ of a node $i$ is defined by following
    $$  
        \hat{\xi}_i = \hat{\xi}(i) = \frac {\Big|E\big(\cN(i), V\setminus \cN(i)\big)\Big|} {\big|\cN(i)\big|\cdot\big|V\setminus \cN(i)\big|} = \frac {\Big|E\big(\cN(i), V\setminus \cN(i)\big)\Big|} {d_i (n-d_i)}.
    $$
\end{defin}

We call the average of the normalized ksi-centrality by the average normalized ksi-coefficient $\hat\Xi$, which can be interpreted as the average capability of a node in a network. We found that the average normalized coefficient ksi $\hat\Xi$ does not depend on the network size $n$, and it is in bijective correspondence with the relative number of edges by which a new node connects to others in the preferential-attachment Barabasi–Albert model~\cite{BATWO}, starting with the star graph (Fig.~\ref{fig:ksiVSrel}, supp. fig.~\ref{fig:ksirel}  left). Thus, this correspondence answers the question: if I have a network, how do I find the parameters in the Barabasi-Albert model to build a network similar to mine? The answer is the following: calculate the average normalized coefficient ksi $\hat\Xi$ and find the corresponded $m$ by using the fitted function (Fig.~\ref{fig:ksiVSrel}, right). These will be the parameters for the desired Barabasi-Albert model.

\begin{figure}[h!]
    \centering
	\includegraphics[width = 0.8\textwidth]{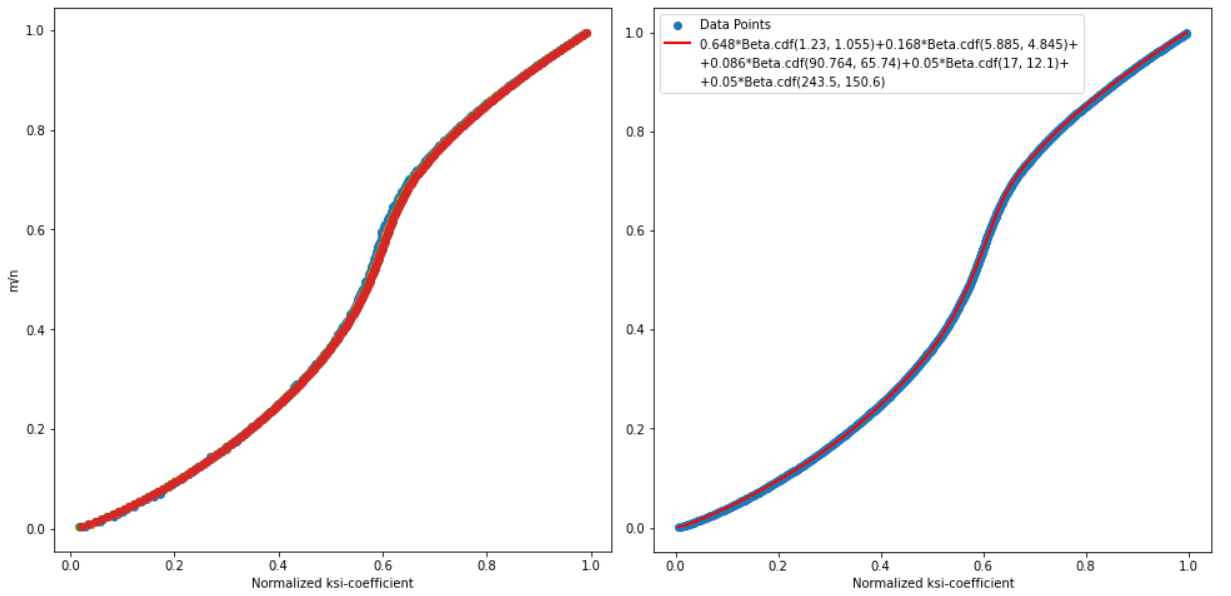}
	\caption{Dependency of normalized ksi-coefficients vs relative number of attachments $m/n$ of for Barabasi-Albert networks $(m,n)$ for fixed $n = 200, 500, 750, 1000$ (left) fitted with the sum of 5 cumulative distribution functions of a beta distributions (right).}
	\label{fig:ksiVSrel}
\end{figure}

We have shown that for a random Erdos-Rényi graph $G(n,p)$ the expected value of average normalized ksi-coefficient tends to $p$ as the number of nodes increases (Thm.~\ref{thm1}). Thus, for random networks we have $p$, equal to the probability of connecting a selected node to another, $p$ --- the internal connectivity of the ego network, i.e. expected value of the average clustering coefficient and also $p$ --- the external connectivity of the ego network, i.e. the average normalized ksi-coefficient $\hat\Xi$ for networks with a sufficiently large number of nodes. 

\begin{thm*}
    For any vertex $i$ of an Erdos-Renyi graph $G(n,p)$ the expected number 
    $$
        \mathbb{E}\big(\hat\xi_i\big) = \mathbb{E}\big( \hat\Xi(G)\big) =  p\Big(1-(1-p)^{n-1}\Big)+\frac {1-p^n} n.
    $$
\end{thm*}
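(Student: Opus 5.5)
The plan is to condition on the degree $d_i=d$. Inside that conditional expectation the denominator $d(n-d)$ of $\hat\xi_i$ is a constant, so only the numerator needs to be averaged. Afterwards I average over $d\sim\mathrm{Bin}(n-1,p)$.

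The second equality in the statement is immediate. By linearity, $\mathbb{E}(\hat\Xi(G))=\frac1n\sum_i\mathbb{E}(\hat\xi_i)$, and all vertices of $G(n,p)$ are exchangeable. So it suffices to compute $\mathbb{E}(\hat\xi_i)$ for one fixed $i$.

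\textbf{Step 1 (conditional computation).} Fix $d\ge1$ and condition on $\cN(i)=S$ with $|S|=d$. The edges counted in $E\big(\cN(i),V\setminus\cN(i)\big)$ are of two kinds.
\begin{itemize}
\item The $d$ edges from $S$ to $i$ itself, all present by definition of $S$.
\item Edges between $S$ and the $n-1-d$ vertices of $V\setminus(S\cup\{i\})$. These $d(n-1-d)$ pairs involve no pair containing $i$. Hence they are independent of the conditioning event and each is present with probability $p$.
\end{itemize}
Therefore
$$\mathbb{E}\big(\hat\xi_i\mid d_i=d\big)=\frac{d+p\,d(n-1-d)}{d(n-d)}=p+\frac{1-p}{n-d}.$$

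\textbf{Step 2 (averaging over the degree).} Put $q=1-p$ and $m=n-1-d_i$, so that $m\sim\mathrm{Bin}(n-1,q)$ and $n-d_i=m+1$. I use the standard binomial identity
$$\mathbb{E}\Big[\frac1{m+1}\Big]=\sum_{m}\binom{n-1}{m}\frac{q^m p^{n-1-m}}{m+1}=\frac1{nq}\sum_{k\ge1}\binom nk q^k p^{n-k}=\frac{1-p^n}{nq}.$$
It follows from $\frac{1}{m+1}\binom{n-1}{m}=\frac1n\binom n{m+1}$. Consequently
$$\mathbb{E}\Big[\frac{1-p}{n-d_i}\Big]=\frac{1-p^n}{n},$$
which is exactly the second term of the claimed formula. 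The first term, $p$, is kept only on the event $d_i\ge1$, giving $p\big(1-(1-p)^{n-1}\big)$.

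\textbf{Main obstacle: the isolated vertex.} The delicate point is the event $d_i=0$, of probability $(1-p)^{n-1}$, where $\hat\xi_i$ is $0/0$ since $G(n,p)$ need not be connected. The stated formula results exactly if on this event one assigns $\hat\xi_i$ the value of the residual term $\frac{1-p}{n-d_i}\big|_{d_i=0}=\frac{1-p}{n}$ while dropping the $p$ part. I would make this convention explicit; equivalently, the term $\frac{1-p}{n-d}$ is summed over all $d$ and $p$ only over $d\ge1$.

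If instead one sets $\hat\xi_i=0$ for isolated vertices, the answer changes by $(1-p)^{n}/n$. This is exponentially small, so the limit $p$ claimed afterwards in Thm.~\ref{thm1} is unaffected either way. Once the convention is fixed, the remaining work is the binomial-sum identity, which is routine.
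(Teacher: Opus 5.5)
Your argument is correct and follows the paper's own route. Condition on $d_i=k$, reduce $\frac{k+pk(n-1-k)}{k(n-k)}$ to $p+\frac{1-p}{n-k}$, and sum using $\frac{1}{n-k}\binom{n-1}{k}=\frac1n\binom nk$.

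Your isolated-vertex caveat is exactly where the paper goes wrong. The paper fixes $\hat\xi_i=\frac1n$ when $d_i=0$. In its last step it then absorbs the term $\frac{(1-p)^{n-1}}{n}$ into the sum as if it were the $k=0$ summand, which is actually $\frac{(1-p)^{n}}{n}$. So under the paper's own convention the exact value is the stated expression plus $\frac{p(1-p)^{n-1}}{n}$. For $n=2$, for instance, one gets $\frac{1+p}{2}$ rather than $\frac{1+p^2}{2}$.

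As you found, the closed form is exact only under the assignment $\hat\xi_i=\frac{1-p}{n}$ on isolated vertices. That value depends on the model parameter $p$, so it is not a genuine graph-level definition. The honest statement should therefore carry the exponentially small correction term, which does not affect the limit $p$.
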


\section{Distributions of ksi-centralities}

For real networks, we found that both normalized ksi-centrality and normalized ksi-coefficient tend to decrease with increasing number of nodes. However, it turned out that for networks with a large number of nodes, both ksi-centrality and normalized ksi-centrality exhibit similar distributions (Supp. fig.~\ref{fig:ksiVSksi}), therefore further we are only calculating ksi-centrality distributions for the most cases.

We have analyzed 40 different real-data networks from different fields: biological, social, internet, transportation and so on. These networks are listed in the supplementary table~\ref{tab:net}. For all these networks, ksi-centrality distributions exhibited properties similar to the node degree distributions: they are right-skewed, looks linear with heavy-tailed on a log plot (Fig.~\ref{fig:lin}), as is the case for the scale-free property, but on a log-log plot (Supp. figs.~\ref{fig:1}--\ref{fig:5}).  A. Broido and A. Clauset showed that the log-normal and Weibull distributions better describe the degree distribution in real networks than the power-law distribution~\cite{Notrare}. On the supplementary figure~\ref{fig:wb} we show, that ksi distributions are also well fitted by the Weibull distribution. 

\begin{figure}[H]\vspace{-10pt}
    \centering
	\includegraphics[width = 0.8\textwidth]{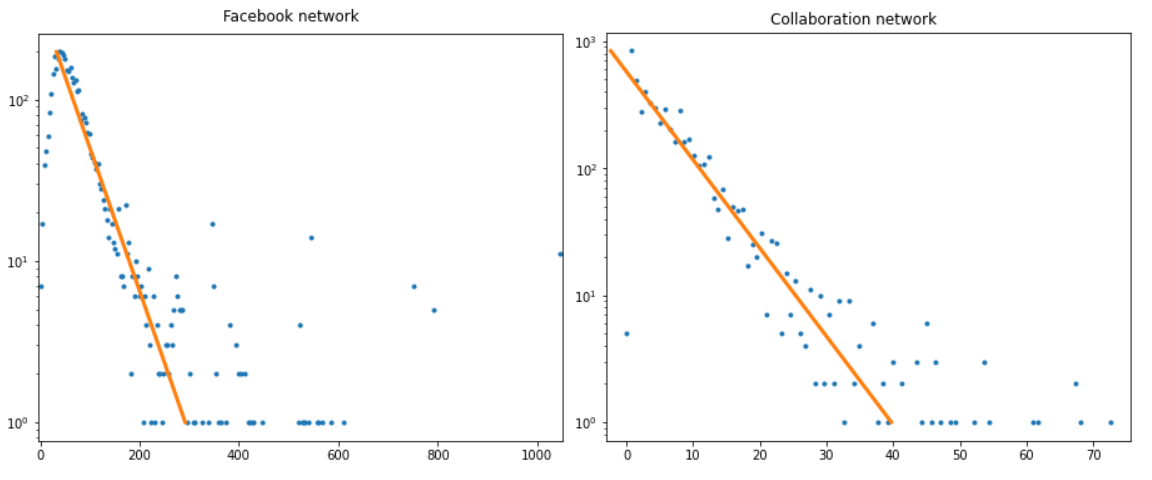}
	\caption{Linear fitting of ksi distribution at logy plot for networks from supplementary table~\ref{tab:net}. The x-axis shows the ksi values, and the y-axis shows the number of vertices.}
	\label{fig:lin}
\end{figure}

For random networks, as with the scale-free property, the ksi distributions were found to be centered, bell-shaped and fitted well by the Weibull distribution (Fig.~\ref{fig:weibart}). We use Pearson's moment coefficient of skewness to calculate the skewness of distributions:

$$
    \gamma_1 (\xi) = \frac {\frac 1 n \sumt_{i\in V(G)} (\xi_i-\Xi)^3} {\Big(\frac1 n \sumt_{i\in V(G)} (\xi_i-\Xi)^2 \Big)^{3/2}},
$$
where $\Xi$ is the average ksi value.

We found that there is a gap in terms of coefficient of skewness between real networks and artificial (Watts-Strogatz, Barabasi-Albert and Boccaletti-Hwang-Latora models). For real networks the coefficient of skewness is greater than 1, and for artificial ones it is less for most parameter values (see supp. fig.~\ref{fig:wb} for real and fig.~\ref{fig:weibart} for artificial). However, if all parameters in the Barabasi-Albert and Watts-Strogatz models are sufficiently small, then the distributions of ksi also turn out to be right-skewed (see supp. figs.~\ref{fig:BA}--\ref{fig:WS2}).

\begin{figure}[h!]
    \centering
	\includegraphics[width = 0.9\textwidth]{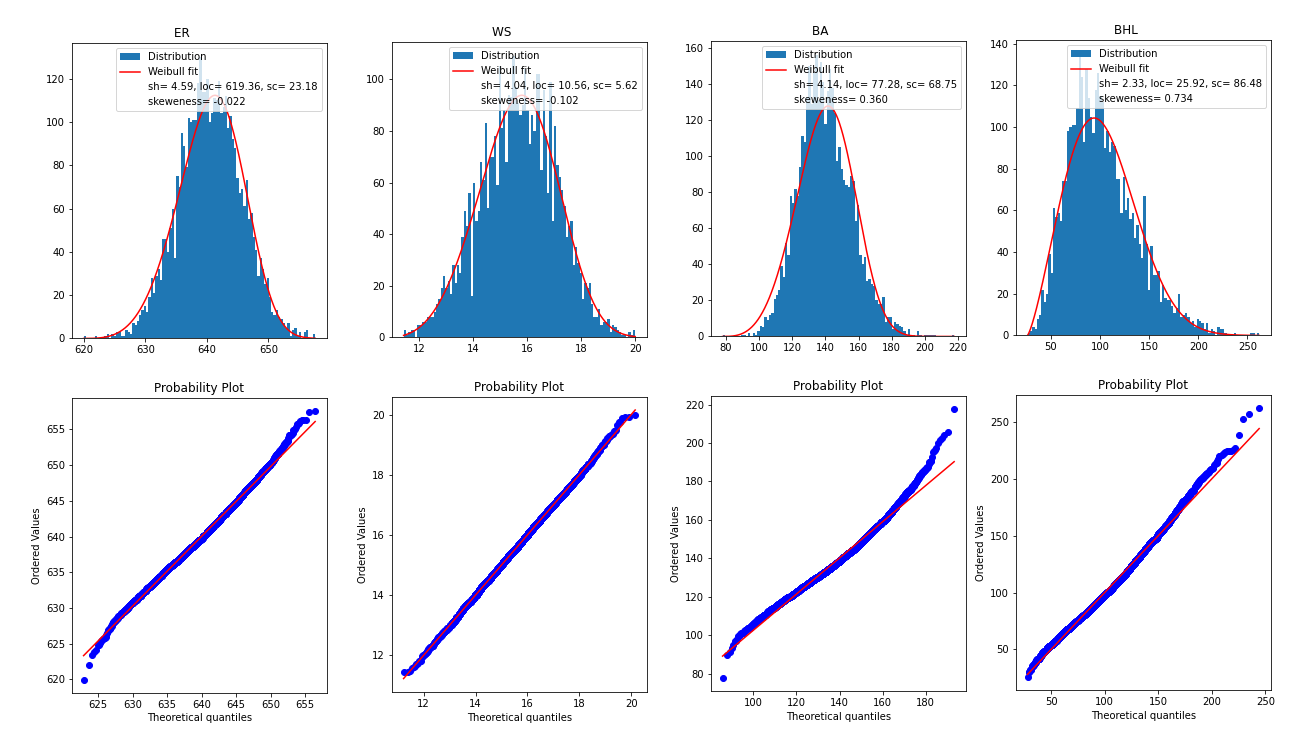}
	\caption{Ksi-distributions fitted with Weibull distribution, their Pearson's moment coefficients of skewness (top) and corresponded QQ-plots (bottom) for networks from models Erdos-Renyi $(n=4000, p=0.2)$, Watts-Strogatz $(n=4000, k=20, p=0.3)$, Barabasi-Albert $(n=4000, m=43)$ and Boccaletti-Hwang-Latora $(n_0=100,  m=20, n=4000)$. The x-axis shows the ksi values, and the y-axis shows the number of vertices for distributions.}
	\label{fig:weibart}
\end{figure}

Therefore, this coefficient can be used as indicator of artificial networks. Also, this threshold works well to differentiate networks in terms of degree distributions (see supp. fig.~\ref{fig:weib_degs}).  

\section{Why ksi but not average neighbor degree?}\label{ksiVSand}

There is a simple relation between the average neighbor degree $k_{nn,i}$, ksi-centrality $\xi_i$, degree $d_i$ and the local clustering coefficient $c_i$. Since the local clustering coefficient is the fraction of the number of edges in $\cN(i)$,
$$
     k_{nn,i} = \xi_i+c_i(d_i-1).
$$
The relationship between the distribution of local clustering coefficients and the distribution of vertex degrees in real-world networks remains unclear. Nevertheless, many of them are small-world networks and thus, have relatively large average clustering coefficients. Therefore, the average neighbor degree depends more on the degree distribution than the distribution of ksi-centrality. However, since the ksi-centrality distribution appears to have properties similar to a scale-free distribution, adding a scale-invariant distribution to it does not significantly alter its shape. We see it on the supplementary figures~\ref{fig:ad1}--\ref{fig:ad5}. Thus, for real networks it exhibits similar properties, but in networks where the degree distribution law is expressed more clearly (for example, in the Barabasi-Albert model), we see this influence clearly (see supp. fig~\ref{fig:adex}). In summary, the average neighbor degree distribution exhibits properties similar to the ksi-distribution for real networks, but does not distinguish real networks from random ones as well as the ksi-centrality distribution does.

\section{Discussion}

In this article we proposed a new measure of centrality called ksi-centrality. This centrality measure can be interpreted as the capability or potential of each node to increase the number of its acquaintances. We have shown that the ksi-centrality distribution has properties similar to those of the node degree distribution: it is well approximated by the Weibull distribution, is right-skewed for real networks, and is centered for random networks. Furthermore, ksi-centrality distributions for real networks exhibit a "linear" behavior with a "heavy tail" on the log plot (the degree distribution has the same property on a log-log plot). We found that the "tail attachment point" on the log plot can be well approximated by the mean logarithm of the distribution.

It turns out that, in contrast to the degree distribution, the distribution of ksi-centrality is centered for well-known models simulating real networks: the Watts-Strogatz, Barabasi-Albert, and Boccaletti-Hwang-Latora models for the most parameter values. Therefore, this distribution is independent of the shape of the vertex degree distribution and classifies these models into the class of artificial networks (as a random network) for most parameter values.

The definition of ksi centrality is very similar to the definition of average neighbor degree, nevertheless, the average neighbor degree depends more heavily on the degree distribution, and therefore the shape of this distribution does not allow one to distinguish real networks from models with a more pronounced degree distribution, such as the Barabasi-Albert model.

We found that ksi-centrality and its normalized version have very similar distributions; however, normalized ksi-centrality appears to approach 0 as the number of nodes increases. Therefore, ksi-centrality will be more convenient for most applications. Nevertheless, normalized ksi-centrality is related to the algebraic connectivity of the network (the second-smallest eigenvalue of the Laplace matrix) and the Chegeer's value (Supp. thms.~\ref{thm:nksi1}, \ref{thm:nksi2}). The average value of normalized ksi-centrality, which can be interpreted as the average capability in a network, is independent of the network size in the Barabasi-Albert model and is in bijective correspondence with the ratio $m/n$ ---the relative number of edges by which a new node connects to others. Therefore, it provides an answer to the question of how to find an appropriate value of $m$ if you want to model your network using the Barabasi-Albert model.

Finally, turn to the interpretation of ksi as a centrality measure --- specifically, what it captures about node importance in the network structure. By definition, ksi-centrality allows one to determine an important node based on the influence of its neighbors, even if those neighbors are not familiar with each other. A node with high ksi-centrality might represent a ruler who has many contacts, and his contacts also have many contacts, or a "man in a gray suit" who has contacts with the most powerful people, and they may be strangers to each other. Supplementary theorem~\ref{thm:star} states that in a star graph, peripheral vertices are more important than the central vertex in terms of ksi-centrality, since their neighbors have more "power" than the neighbors of the central vertex. Therefore, this centrality measure does not satisfy the Freeman star property~\cite{Free}. We call this ksi centrality only because it is a function on the nodes, just like the clustering coefficient. Moreover, like the clustering coefficient, its normalized version is constant on the vertices of the star graph, has the same computational complexity (Supp. cor.~\ref{cor1}) and is equal to $p$ for a random Erdos-Renyi graph with a large number of nodes.

\section{Code availability}
Code is open source and available at \url{https://github.com/Samoylo57/ksi-centrality}.

\section{Acknowledgments}
We thank Ivan Samoylenko for useful comments and development of the code for calculation of ksi-centrality and Aleksandr Levin for the excellent data fitting. The support from the Basic Research Program of HSE University is gratefully acknowledged.

\newpage
\title{Supplementary Material \\
\emph{Capability centrality: the next step from scale-free property}}
\maketitle

In what follows we consider simple undirected graphs $G$ with $A = A(G) = \{a_{ij}\}$ --- adjacency matrix and by $L = L(G) = \{l_{ij}\}$ --- Laplacian matrix.

Since $\frac {\Big|E\big(\cN(i), V\setminus \cN(i)\big)\Big|} {d_i} = \frac {d_i} {d_i} = 1$, when vertices of $\cN(i)\cup\{i\}$ have no adjacent vertices except themselves, we define ${\xi}_i = 1$ in the case when $d_i = 0$. Also note that vertex $i\in V\setminus \cN(i)$, thus the ksi-centrality $\xi_i$ is always greater than or equal to 1. Therefore, we give additional normalized definition --- normalized ksi-centrality $\hat{\xi}$. For this centrality $\frac 1 {n-d_i}\leq\hat{\xi}_i\leq 1$. Since $\frac {\Big|E\big(\cN(i), V\setminus \cN(i)\big)\Big|} {d_i(n-d_i)} = \frac {d_i} {d_i(n-d_i)} = \frac 1 {n-d_i}$, when vertices of $\cN(i)\cup \{i\}$ have no adjacent vertices except themselves, we define $\hat{\xi}_i = \frac 1 n$ for the case, when $d_i = 0$. 

Next we show that for an Erdos-Renyi graph $(n,p)$ the expected value of both normalized ksi-centrality and the average normalized ksi-coefficient equals to almost $p$ for large $n$. Let's first prove

\begin{thm}
    For any vertex $i$ of an Erdos-Renyi graph $G(n,p)$ the expected value
    $$
        \mathbb{E}\Big(\Big|E\big(\cN(i), V\setminus \cN(i)\big)\Big|\Big) = p(n-1)(1+p(1-p)(n-2)).
    $$
\end{thm}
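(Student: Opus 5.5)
The plan is to write the quantity $\big|E\big(\cN(i), V\setminus \cN(i)\big)\big|$ as a sum of edge indicators and use linearity of expectation. No independence between different terms is needed, only independence of the three edges inside each term. An edge is counted exactly when one endpoint lies in $\cN(i)$ and the other does not. Since $\cN(i)$ and $V\setminus\cN(i)$ are disjoint, each such edge determines a unique ordered pair $(v,w)$ with $v\in\cN(i)$ and $w\in V\setminus\cN(i)$. So there is no double counting, and we may sum over ordered pairs.

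We split the pairs by whether the outer endpoint $w$ is the vertex $i$ itself.

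\emph{Case $w=i$.} Every $v\in\cN(i)$ is joined to $i$, and $i\notin\cN(i)$. So this part contributes exactly $d_i$ edges. Its expectation is $\mathbb{E}(d_i)=p(n-1)$.

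\emph{Case $w\neq i$.} Fix an ordered pair $(v,w)$ of distinct vertices, both different from $i$. The pair contributes $1$ exactly when three conditions hold: $iv$ is an edge (so $v\in\cN(i)$), $iw$ is not an edge (so $w\in V\setminus\cN(i)$), and $vw$ is an edge. These are three distinct vertex pairs, so the events are independent in $G(n,p)$. The probability is therefore $p\cdot(1-p)\cdot p=p^2(1-p)$. There are $(n-1)(n-2)$ such ordered pairs, so this part contributes $p^2(1-p)(n-1)(n-2)$ in expectation.

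Adding the two cases gives
$$p(n-1)+p^2(1-p)(n-1)(n-2)=p(n-1)\big(1+p(1-p)(n-2)\big),$$
which is the claimed formula.

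The only delicate point is the bookkeeping. One must remember that $i$ itself lies in $V\setminus\cN(i)$; this produces the $d_i$ term, and it is why the pairs with $w=i$ are handled separately. One must also check that summing over ordered pairs counts each edge exactly once. The computation itself is routine.
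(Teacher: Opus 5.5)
Your proof is correct, but it takes a different route from the paper. Write $e$ for the edge count. The paper conditions on $d_i=k$ and notes that then $e=k+\mathrm{Bin}\big(k(n-1-k),p\big)$, because the edges from $\cN(i)$ to $V\setminus\cN(i)\setminus\{i\}$ are independent of those at $i$. It averages $k+p\,k(n-1-k)$ over the binomial law of $d_i$, and evaluates $\mathbb{E}(d_i)=p(n-1)$ and $\mathbb{E}\big(d_i(n-1-d_i)\big)=(n-1)(n-2)p(1-p)$ by differentiating $(x+y)^{n-1}$.

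You skip the distributional computation entirely and apply linearity of expectation to edge indicators. This makes the two terms transparent. The term $p(n-1)$ counts edges back to $i$, and $p^2(1-p)(n-1)(n-2)$ is the expected number of paths $i\sim v\sim w$ with $w\neq i$ and $w\not\sim i$. Your decomposition is exactly the term-by-term expectation of the paper's later identity $\big|E(\cN(i),V\setminus\cN(i))\big|=\sum_{j,k}a_{ij}a_{jk}\overline a_{ki}$, with the $k=i$ terms giving $d_i$.

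What the paper's heavier route buys is the conditional law of $e$ given $d_i$, which it reuses at once in Theorem~\ref{thm1}. There the random normalization $1/\big(d_i(n-d_i)\big)$ breaks plain linearity, so one must condition on the degree anyway. Even there, your argument run conditionally on $\cN(i)$ gives $\mathbb{E}(e\mid d_i=k)=k+p\,k(n-1-k)$ more directly than summing the binomial pmf over $t$.
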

\begin{proof}
    Denote the random variable $e = \big|E(\cN(i), V\setminus\cN(i))\big|$. Note that $P(d_i = k) = C_{n-1}^kp^k(1-p)^{n-1-k}$. Since the maximum number of edges from $\cN(i)$ to $V\setminus \cN(i)\setminus\{i\}$ equal to $k(n-1-k)$, thus $P(e = t+k\,|\,d_i = k) = C_{k(n-1-k)}^t p^t (1-p)^{k(n-1-k)-t}$. Let's denote $f(k) = k(n-1-k)$. The expected value 
    $$
        \mathbb{E}(e) = \sumt_{k = 0}^{n-1}\sumt_{t = 0}^{k(n-1-k)} (t+k)\, P(e = t+k) =  \sumt_{k = 0}^{n-1}\sumt_{t = 0}^{f(k)} (t+k)\, P(e = t+k\,|\,d_i = k)\, P(d_i = k) =
    $$
    $$
        = \sumt_{k = 0}^{n-1}\sumt_{t = 0}^{f(k)} (t+k)\, C_{f(k)}^t p^t (1-p)^{f(k)-t} C_{n-1}^kp^k(1-p)^{n-1-k} = 
    $$
    $$
       = \sumt_{k = 0}^{n-1} C_{n-1}^k p^k (1-p)^{n-k-1}\sumt_{t = 0}^{f(k)} (t+k)\,C_{f(k)}^t(1-p)^{f(k)-t} p^{t} = 
    $$
    $$
        = \sumt_{k = 0}^{n-1} C_{n-1}^k p^k (1-p)^{n-k-1}\Big(k+\sumt_{t = 1}^{f(k)} t\,C_{f(k)}^t(1-p)^{f(k)-t} p^{t}\Big)
    $$
    
    Note that $\sumt_{t = 0}^{f(k)}\,C_{f(k)}^t(1-p)^{f(k)-t} p^{t} = (p+1-p)^{f(k)} = 1$. Also $n(x+y)^{n-1}= \Big((x+y)^n\Big)_x = \Big(\sumt_{t = 0}^nC_{n}^t x^{t}y^{n-t}\Big)_x = \sumt_{t = 1}^n t C_{n}^t x^{t-1}y^{n-t}$, thus
    $$
        \sumt_{k = 0}^{n-1} C_{n-1}^k p^k (1-p)^{n-k-1}\Big(k+\sumt_{t = 1}^{f(k)} t\,C_{f(k)}^t(1-p)^{f(k)-t} p^{t}\Big) = \sumt_{k = 0}^{n-1} C_{n-1}^k p^k (1-p)^{n-1-k}\big(k+p f(k)\big) = 
    $$
    $$
        = p (n-1) + \sumt_{k = 0}^{n-1} C_{n-1}^k p^{k+1} (1-p)^{n-1-k} k (n-1-k) = 
    $$
    $$
        = p (n-1) + p^2(1-p) \sumt_{k = 1}^{n-2} C_{n-1}^k p^{k-1} (1-p)^{n-2-k} k (n-1-k) =p(n-1)+p^2(1-p)(n-1)(n-2),
    $$
    using the same procedure for $(n-1)(n-2)(x+y)^{n-3}= \Big((x+y)^{n-1}\Big)_{xy} = \sumt_{t = 1}^{n-2} t (n-1-t) C_{n-1}^t x^{t-1}y^{n-2-t}$.

\end{proof}

\begin{thm}\label{thm1}
    For any vertex $i$ of an Erdos-Renyi graph $G(n,p)$ the expected number 
    $$
        \mathbb{E}\big(\hat\xi_i\big) = \mathbb{E}\big( \hat\Xi(G)\big) =  p\Big(1-(1-p)^{n-1}\Big)+\frac {1-p^n} n.
    $$
\end{thm}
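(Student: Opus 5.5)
The plan is to condition on the degree $d_i=k$, as in the proof of the previous theorem, and to reuse the conditional expectation obtained there. First, $\mathbb{E}(\hat\Xi(G))=\mathbb{E}(\hat\xi_i)$ follows from linearity of expectation and the fact that all vertices of $G(n,p)$ are exchangeable. So it suffices to compute $\mathbb{E}(\hat\xi_i)$ for one fixed vertex $i$.

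Write $e=\big|E(\cN(i),V\setminus\cN(i))\big|$. For $k\ge 1$ we have $\hat\xi_i=e/(k(n-k))$ on the event $d_i=k$. The previous proof showed that, given $d_i=k$, $e-k$ is binomial with parameters $f(k)=k(n-1-k)$ and $p$. Hence
$$
\mathbb{E}(e\,|\,d_i=k)=k+pk(n-1-k),
$$
and the key simplification is
$$
\frac{k+pk(n-1-k)}{k(n-k)}=\frac{1+p(n-k)-p}{n-k}=p+\frac{1-p}{n-k}.
$$
This gives
$$
\mathbb{E}(\hat\xi_i)=P(d_i=0)\,\hat\xi_i\big|_{d_i=0}+\sumt_{k=1}^{n-1}C_{n-1}^kp^k(1-p)^{n-1-k}\Big(p+\frac{1-p}{n-k}\Big).
$$

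The $p$-part of the sum is $p$ times $P(d_i\ge1)$, which equals $p\big(1-(1-p)^{n-1}\big)$. This is exactly the first term of the claim. For the remaining part I will use the identity $C_{n-1}^k/(n-k)=C_n^k/n$, which converts it into
$$
\frac1n\sumt_{k=1}^{n-1}C_n^kp^k(1-p)^{n-k}=\frac{1-p^n-(1-p)^n}{n},
$$
a truncated binomial sum missing the $k=0$ and $k=n$ terms. The term $-p^n/n$ here produces the $p^n$ in the claimed formula.

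The delicate step is the bookkeeping of the isolated-vertex case $d_i=0$. The convention $\hat\xi_i=\frac1n$ contributes $(1-p)^{n-1}/n$, and this must combine with $-(1-p)^n/n$ to give the $\frac{1}{n}$ term in the statement. Carrying out this combination gives
$$
p\big(1-(1-p)^{n-1}\big)+\frac{1-p^n}{n}+\frac{p(1-p)^{n-1}}{n},
$$
which differs from the stated formula by $p(1-p)^{n-1}/n$. The stated formula would instead require the value $\hat\xi_i=(1-p)/n$ on the event $d_i=0$. I would therefore first recheck the $d_i=0$ convention, the binomial identity and the truncation endpoints. If the discrepancy survives that check, I would report it: the error term $p(1-p)^{n-1}/n$ is exponentially small, so the asymptotic conclusion $\mathbb{E}(\hat\Xi)\to p$ is unaffected either way.
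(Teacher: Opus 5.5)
Your argument follows the paper's own route: condition on $d_i=k$, use that $e-k$ is binomial with parameters $f(k)$ and $p$, simplify to $p+\frac{1-p}{n-k}$, and apply $C_{n-1}^k/(n-k)=C_n^k/n$. Your bookkeeping is correct and the discrepancy is real, so you can drop the hedge. The error is in the paper's proof, at the step where it rewrites
\[
\frac{(1-p)^{n-1}}{n}+\sumt_{k=1}^{n-1}C_{n-1}^k p^k(1-p)^{n-k}\frac{1}{n-k}
\quad\text{as}\quad
\sumt_{k=0}^{n-1}\frac{(n-1)!}{(n-k)!\,k!}\,p^k(1-p)^{n-k}.
\]
This treats the isolated-vertex contribution $\frac{(1-p)^{n-1}}{n}$ as the $k=0$ summand. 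That summand is actually $\frac{(1-p)^{n}}{n}$, and the difference is exactly your $\frac{p(1-p)^{n-1}}{n}$.

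A direct check confirms this. For $n=2$, $\hat\xi_i=1$ if the edge is present and $\hat\xi_i=\frac12$ otherwise. So $\mathbb{E}(\hat\xi_i)=p+\frac{1-p}{2}=\frac{1+p}{2}$, while the stated formula gives $\frac{1+p^2}{2}$. The theorem as stated is therefore false for every $0<p<1$. What your computation proves is
\[
\mathbb{E}\big(\hat\xi_i\big)=\mathbb{E}\big(\hat\Xi(G)\big)=p+\frac{1-p^n-(n-1)\,p\,(1-p)^{n-1}}{n}.
\]
As you note, the correction is exponentially small for fixed $p$. It is also at most $\lambda/n^2$ when $p=\lambda/n$, so the paper's sparse expansion $\frac{1+\lambda(1-e^{-\lambda})}{n}+O(1/n^2)$ is unaffected as well.
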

\begin{proof}
    Let's do the same calculations as in the previous theorem, for $\hat\xi_i = \frac {\Big|E\big(\cN(i), V\setminus \cN(i)\big)\Big|} {d_i (n-d_i)}$. Note that by definition $\hat\xi_i = \frac 1 n$, when $d_i = 0$. Thus, 

    $$
        \mathbb{E}(\hat\xi_i) = \frac 1 n P(e = 0\,|\,d_i =0)\, P(d_i = 0)+\sumt_{k = 1}^{n-1}\sumt_{t = 0}^{f(k)} \frac {t+k} {k(n-k)}\, P(e = t+k\,|\,d_i = k)\, P(d_i = k) =
    $$
    $$
        = \frac {(1-p)^{n-1}} n+\sumt_{k = 1}^{n-1} C_{n-1}^k p^k (1-p)^{n-k-1}\sumt_{t = 0}^{f(k)} \frac {t+k} {k(n-k)}\,C_{f(k)}^t(1-p)^{f(k)-t} p^{t} =
    $$
    $$
        = \frac {(1-p)^{n-1}} n+\sumt_{k = 1}^{n-1} C_{n-1}^k p^k (1-p)^{n-k-1}\frac {k+p f(k)} {k(n-k)} = \frac {(1-p)^{n-1}} n+ \sumt_{k = 1}^{n-1} C_{n-1}^k p^k (1-p)^{n-1-k} \frac {1+p(n-1-k)} {n-k} =
    $$
    $$
          =\frac {(1-p)^{n-1}} n+ \sumt_{k = 1}^{n-1} C_{n-1}^k p^k (1-p)^{n-1-k} \frac {1+p(n-1-k)} {n-k} =\frac {(1-p)^{n-1}} n+ p-p(1-p)^{n-1}+
    $$
    $$
     + \sumt_{k = 1}^{n-1} C_{n-1}^k p^k (1-p)^{n-k}\frac 1 {n-k} = p-p(1-p)^{n-1}+ \sumt_{k = 0}^{n-1} \frac {(n-1)!} {(n-k)!\, k!} p^k (1-p)^{n-k} = 
    $$
    $$
    = p-p(1-p)^{n-1}+\frac 1 n\sumt_{k = 0}^{n-1} C_n^k p^k (1-p)^{n-k} =  p-p(1-p)^{n-1}+\frac {1-p^n} n.
    $$
    The same result for $\hat\Xi(G)$, since $\hat\Xi(G)$ is the average of  $\hat\xi_i$.
\end{proof}

If the number of vertices in the Erdos-Renyi graph $G(n,p)$ is large, then $\hat\Xi(G)\sim p$ and it is the same as the average clustering coefficient $C_{WS}(G)$. For a sparse Erdos-Renyi graph $G(n,p),\,p = \frac \l n$ the expected value of average normalized ksi-coefficient
$$
    \mathbb{E}\big(\hat\Xi(G)\big) = \frac \l n\Bigg(1-\Big(1-\frac \l n\Big)^{n-1}\Bigg)+\frac {1-\Big(\frac \l n\Big)^n} n = \frac {1+\l\big(1- e^{-\l}\big)} n + O\Big(\frac 1 {n^2}\Big),
$$

Therefore, it is asymptotically equivalent to the behavior of the expected value of average clustering coefficient $\mathbb{E}\big(C_{WS}(G)\big) = \frac \l n$. 

For fast calculations, the value $\Big|E\big(\cN(i), V\setminus \cN(i)\big)\Big|$ can be found by multiplying the adjacency matrix by two columns of the adjacency matrix:

\begin{lm}
$$
    E\big(\cN(i), V\setminus \cN(i)\big) = \sumt_{j, k\in V(G)} a_{ij} a_{jk} \overline a_{ki},
$$ where $\overline a_{ki} = 1-a_{ki}.$
\end{lm}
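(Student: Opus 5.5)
The plan is to count directly with indicator variables. The key observation is that the product $a_{ij}a_{jk}\overline a_{ki}$ is the indicator of the event ``$j\in\cN(i)$, $k$ is adjacent to $j$, and $k\notin\cN(i)$''. Here $a_{ij}=1$ exactly when $j\in\cN(i)$, and $a_{jk}=1$ exactly when $(j,k)$ is an edge. Also $\overline a_{ki}=1-a_{ki}=1$ exactly when $k$ is not adjacent to $i$, that is, when $k\in V\setminus\cN(i)$. Because the graph is simple, $a_{ii}=0$, so $\overline a_{ii}=1$. Hence $k=i$ is correctly counted as an element of $V\setminus\cN(i)$, in agreement with the remark in the main text that vertex $i$ is counted $d_i$ times.

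The key steps, in order, are as follows.

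First, rewrite the sum as $\sumt_{j\in V}a_{ij}\sumt_{k\in V}a_{jk}\overline a_{ki}$. The outer factor restricts $j$ to $\cN(i)$.

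Second, for fixed $j\in\cN(i)$, the inner sum $\sumt_{k}a_{jk}\overline a_{ki}$ equals the number of neighbours $k$ of $j$ lying in $V\setminus\cN(i)$.

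Third, summing over $j\in\cN(i)$ gives the number of ordered pairs $(j,k)$ with $j\in\cN(i)$, $k\in V\setminus\cN(i)$ and $jk$ an edge. Since $\cN(i)$ and $V\setminus\cN(i)$ are disjoint, each such edge has exactly one end in each set, so it is counted exactly once. The count is therefore $\big|E(\cN(i),V\setminus\cN(i))\big|$, matching the definition (the lemma writes $E$ for its cardinality).

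The only point needing care is the bookkeeping at the diagonal. We must check that no pair is double counted and that the terms with $k=i$ or $j=k$ behave correctly. The case $j=k$ contributes nothing because $a_{jj}=0$. The case $k=i$ contributes $a_{ij}a_{ji}=a_{ij}$ for each $j$, so these terms add up to $d_i$ in total, which is exactly the contribution of the $d_i$ edges joining $\cN(i)$ to $i$. I do not expect any real obstacle here. As a remark, the identity also says $E(\cN(i),V\setminus\cN(i))=(A\,(A\overline{a}_{\cdot i}))_i$, that is, the $i$-th entry of $A$ applied to $A$ times the $i$-th column of $\overline A=J-A$. This is what justifies the computational claim made just before the lemma.
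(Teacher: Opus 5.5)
Your proof is correct and is essentially the paper's argument: both read $a_{ij}a_{jk}\overline a_{ki}$ as the indicator of a path $i\sim j\sim k$ with $k\notin\cN(i)$, and in both the terms with $k=i$ account for the $d_i$ edges back to $i$. The only difference is that you sum over $k$ first for fixed $j\in\cN(i)$, which gives a cleaner bijection with edges than the paper's grouping by $k$. In the paper's grouping, $\sum_j a_{ij}a_{jk}$ is really the number of common neighbours of $i$ and $k$, not simply $0$ or $1$ as its case split suggests.
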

\begin{proof}
    Let's fix $i$ and note that
    $$\sumt_{j\in V(G)} a_{ij} a_{jk} = \begin{cases}
        d_i, & k = i, \\
        1, & i\sim j\sim k, \\
        0, & \text{otherwise},
    \end{cases}
    \qquad\text{and}\qquad
    1 - a_{ki} = \begin{cases}
        1, & k = i, \\
        1, & k\not\sim i, \\
        0, & k\sim i.
    \end{cases}
    $$
    Therefore, 
    $$
        \Big|E\big(\cN(i), V\setminus \cN(i)\big)\Big| = d_i+ \big|k, j\in V(G): i\sim j\sim k, k\not\sim i\big| =  \sumt_{j, k\in V(G)} a_{ij} a_{jk} \overline a_{ki}.
    $$
\end{proof}

\begin{cor}\label{cor1} For each vertex $i$
    $$
        \xi_i = \frac {\Big(A^2\cdot\overline A\Big)_{ii}} {\big(A^2\Big)_{ii}},
    $$
    where $\overline{A} = I-A$ for $I$ --- matrix of all ones.
\end{cor}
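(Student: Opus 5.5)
The plan is to read off both the numerator and the denominator of the definition of $\xi_i$ as diagonal entries of matrix products, using the preceding Lemma for the numerator. Throughout, $\overline A = I - A$ with $I$ the all-ones matrix, as in the statement. Then $\overline a_{ki} = 1 - a_{ki}$ entrywise, which is exactly the quantity used in the Lemma. Since $G$ is simple, $a_{ii}=0$ and hence $\overline a_{ii}=1$; this is the convention under which the Lemma counts the $d_i$ edges from $\cN(i)$ back to $i$.

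First, the numerator. I would expand the triple product entrywise:
$$
\big(A^2\overline A\big)_{ii}=\sumt_{k}(A^2)_{ik}\,\overline a_{ki}=\sumt_{k}\sumt_{j}a_{ij}a_{jk}\,\overline a_{ki}.
$$
By the Lemma, this equals $\big|E\big(\cN(i),V\setminus\cN(i)\big)\big|$. Matrix multiplication is associative, so it does not matter whether this is grouped as $A^2\cdot\overline A$ or as $A\cdot(A\overline A)$.

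Second, the denominator. Because $A$ is symmetric with $0/1$ entries,
$$
(A^2)_{ii}=\sumt_j a_{ij}a_{ji}=\sumt_j a_{ij}^2=\sumt_j a_{ij}=d_i .
$$
Dividing the two expressions gives exactly the definition $\xi_i=\big|E(\cN(i),V\setminus\cN(i))\big|/d_i$ whenever $d_i\ge 1$.

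The only step needing care is the degenerate case $d_i=0$. There both the numerator and the denominator vanish, and the formula must be read with the convention $\xi_i=1$ adopted at the start of the Supplementary Material. I would state explicitly that the identity holds for every non-isolated vertex, and that for isolated vertices it holds under that convention. Otherwise the argument is a direct translation of the Lemma, and I expect no real obstacle.

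Finally, I would note the computational consequence that justifies calling this a fast formula. All the diagonal entries $(A^2\overline A)_{ii}=\sum_k (A^2)_{ik}\,\overline a_{ki}$ can be obtained from $A^2$ by a row-by-column dot product for each $i$. So the cost is dominated by computing $A^2$, as for the clustering coefficient.
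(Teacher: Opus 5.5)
Your proposal is correct and follows the paper's own (implicit) argument: the corollary is read off from the preceding Lemma via $\sum_{j,k} a_{ij}a_{jk}\overline a_{ki} = (A^2\overline A)_{ii}$, together with $(A^2)_{ii}=d_i$. Your remark that the formula becomes $0/0$ for an isolated vertex, and must then be read with the convention $\xi_i=1$, is a correct caveat that the paper leaves unstated.
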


Therefore, the calculation complexity for ksi-centrality is the same as for the local clustering centrality $c_i = \frac {\big(A^3\big)_{ii}} {\big(A^2\big)_{ii}\Big(\big(A^2\big)_{ii}-1\Big)}.$

It is easy to see that by definition, the centrality ksi and the normalized centrality ksi are related to the Chegeer's number: 

\begin{thm}\label{thm:nksi1} Let $h(G)$ be the Chegeer's number of $G$.
\begin{enumerate}
    \item If the degree $d_i\leq\frac n 2$ for a vertex $i$, then $\xi_i\geq h(G)$,
    \item $\hat\xi_i\geq \begin{cases}
            h(G)\, (n-d_i), & \text{if }\, d_i\leq\frac n 2, \\
            h(G)\, d_i, & \text{otherwise.}
    \end{cases}$
\end{enumerate}

\end{thm}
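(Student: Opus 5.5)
The plan is to apply the definition of the Cheeger number directly to one well-chosen vertex set. I use the standard form
$$
h(G) = \min_{S\subset V,\ 0<|S|\leq n/2} \frac{|E(S, V\setminus S)|}{|S|}.
$$
The natural test set is $S=\cN(i)$, whose edge boundary $E\big(\cN(i), V\setminus\cN(i)\big)$ is exactly the numerator of both $\xi_i$ and $\hat\xi_i$. I assume $G$ is connected, as in the main text, so $d_i\geq 1$ and the conventions for $d_i=0$ never arise. Otherwise I would treat $d_i=0$ separately: there $h(G)=0$ for a disconnected graph, taking $S$ to be the smaller component, and the bounds are trivial.

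For part 1, assume $d_i\leq n/2$. Then $S=\cN(i)$ is admissible in the minimum, since $0<|S|=d_i\leq n/2$. Hence
$$
|E(\cN(i),V\setminus\cN(i))|\geq h(G)\,d_i,
$$
and dividing by $d_i$ gives $\xi_i\geq h(G)$.

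For part 2, I split into the same two cases.

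\emph{Case $d_i\leq n/2$.} The bound from part 1 divided by $d_i(n-d_i)$ yields
$$
\hat\xi_i\geq \frac{h(G)}{n-d_i}.
$$

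\emph{Case $d_i>n/2$.} Now $\cN(i)$ is too large to be a test set, so I use the complement $S=V\setminus\cN(i)$ instead. It has $|S|=n-d_i<n/2$, and it is nonempty because it contains $i$. It has the same edge boundary, so
$$
|E|\geq h(G)\,(n-d_i),
$$
and dividing by $d_i(n-d_i)$ gives
$$
\hat\xi_i\geq \frac{h(G)}{d_i}.
$$

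The main obstacle is reconciling this with the form printed in the statement. The argument produces $h(G)/(n-d_i)$ and $h(G)/d_i$, not the products $h(G)(n-d_i)$ and $h(G)\,d_i$. The products cannot hold in general. Since $\hat\xi_i\leq 1$, they would force $h(G)\leq 1/(n-d_i)$, which fails for instance in the complete graph: there $h(K_n)=\lceil n/2\rceil$, while $\hat\xi_i=1$ for every vertex. So I would prove and state the bounds with division, reading the printed multiplication as a typo.

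The only other check is that $i$ itself lies in $V\setminus\cN(i)$. This is harmless: the edges from $\cN(i)$ to $i$ are counted in the boundary, consistently with the definition of $\xi_i$, so no correction term appears.
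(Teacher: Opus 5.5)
Your proof is correct and is exactly the definitional argument the paper has in mind; the paper gives no written proof beyond ``it is easy to see by definition''. You test the Cheeger ratio on $\cN(i)$, or on its complement $V\setminus\cN(i)\ni i$ when $d_i>n/2$, both of which have edge boundary $E\big(\cN(i),V\setminus\cN(i)\big)$. Your reading of part 2 is also right: the argument yields $\hat\xi_i\geq h(G)/(n-d_i)$, resp.\ $\hat\xi_i\geq h(G)/d_i$, and the printed products cannot hold since $\hat\xi_i\leq 1$, with $K_n$ for $n\geq 3$ a valid counterexample.
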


The next result is that the normalized ksi-centrality (as well as the average normalized ksi-coefficient) are bounded by the $\l_2$ algebraic connectivity (or the second eigenvalue of the Laplacian matrix).

\begin{thm}\label{thm:nksi2}
    Consider connected graph $G$. Let's denote Laplacian matrix spectrum by $0=\l_1<\l_2\leq \l_3\leq\ ...\ \leq \l_n$. For any vertex $i\in V(G)$
    $$
   \hat\xi_i\geq  \frac {\l_2} n, \qquad
   \hat\Xi(G)\geq  \frac {\l_2} n.
    $$
\end{thm}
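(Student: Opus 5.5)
The plan is to apply the Courant--Fischer (Rayleigh quotient) characterization of $\l_2$ to a single test vector built from the set $S=\cN(i)$. This gives the standard Fiedler-type cut inequality
$$
\Big|E\big(S, V\setminus S\big)\Big|\;\geq\; \frac{\l_2}{n}\,|S|\,\big|V\setminus S\big|
$$
for every nonempty proper subset $S\subset V(G)$. With $S=\cN(i)$ this is precisely $\hat\xi_i\geq \l_2/n$.

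First I check that $S=\cN(i)$ is a legitimate choice. Since $G$ is connected with $n\geq 2$, we have $d_i\geq 1$, so $S\neq\emptyset$. Also $i\notin\cN(i)$, so $|V\setminus S|=n-d_i\geq 1$. Hence the denominator $d_i(n-d_i)$ in the definition of $\hat\xi_i$ is positive and the special convention for $d_i=0$ plays no role. Because $S$ and $V\setminus S$ are disjoint, $E(S,V\setminus S)$ counts every edge of the cut exactly once. So the numerator of $\hat\xi_i$ is the usual cut size $|\partial S|$.

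Next I prove the cut inequality. Put $s=|S|$ and define $x\in\rR^n$ by $x_v=n-s$ for $v\in S$ and $x_v=-s$ for $v\notin S$. Then $\sum_v x_v = s(n-s)-(n-s)s=0$, so $x\perp\mathbf 1$, the eigenvector of $\l_1=0$. Courant--Fischer therefore gives $\l_2\leq x^TLx/x^Tx$. We have $x^TLx=\sum_{uv\in E}(x_u-x_v)^2$. Only cut edges contribute, each contributing $n^2$, so $x^TLx=n^2|\partial S|$. Also $x^Tx=s(n-s)^2+(n-s)s^2=n\,s(n-s)$. Combining these,
$$
\l_2\leq \frac{n\,|\partial S|}{s(n-s)},
$$
which rearranges to the displayed inequality.

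Substituting $S=\cN(i)$, $s=d_i$ gives $\hat\xi_i=|\partial S|/\big(d_i(n-d_i)\big)\geq \l_2/n$ for every vertex $i$. Averaging this bound over all $n$ vertices yields $\hat\Xi(G)\geq \l_2/n$.

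None of the steps is a serious obstacle. The only points needing care are the bookkeeping ones: that $E(\cN(i),V\setminus\cN(i))$ coincides with the edge boundary of $\cN(i)$ (each edge counted once, including the $d_i$ edges back to $i$), and that connectivity excludes the degenerate cases $d_i=0$ and $d_i=n$.
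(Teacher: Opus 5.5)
Your proof is correct and follows the paper's argument: the paper uses the same test vector ($n-d_i$ on $\cN(i)$, $-d_i$ elsewhere) in the Rayleigh-quotient characterization of $\lambda_2$, with the same computations $(y,y)=n\,d_i(n-d_i)$ and $y^{T}Ly=n^2\bigl|E\bigl(\cN(i),V\setminus\cN(i)\bigr)\bigr|$, and then averages over vertices. Your check that connectivity forces $1\le d_i\le n-1$, so the denominator is positive, is a detail the paper leaves implicit.
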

\begin{proof}
    Remind that $\l_2 = \mint_{x\in \rR^n, (x,\bf{1}) = 0} \frac {(Lx, x)} {(x,x)} = \mint_{x\in \rR^n, (x,\bf{1}) = 0} \frac {\sumt_{i,j\in V(G), i\sim j}(x_i-x_j)^2} {(x,x)},$ where \bf{1} is the vector of ones and $(\cdot,\cdot)$ is the standard dot product in $\rR^n$. Let's define for each vertex $i$ a vector
    $$
        y = \big(y_j\big) = \begin{cases}
            n-d_i, & j\in \cN(i), \\
            -d_i, & j\in V(G)\setminus\cN(i).
        \end{cases}
    $$
    Therefore, $(y, \bf{1}) = 0$ and $(y,y) = (n-d_i)^2d_i+d_i^2(n-d_i) = d_i(n-d_i)n$. Also $\sumt_{k,j\in V(G), k\sim j}(y_k-y_j)^2 = n^2 \Big|E\big(\cN(i), V\setminus\cN(i)\big)\Big|.$ Therefore,
    $$
        \l_2\leq \frac {\sumt_{k,j\in V(G), k\sim j}(y_i-y_j)^2} {(y,y)} = n \frac {\Big|E\big(\cN(i), V\setminus\cN(i)\big)\Big|} {d_i(n-d_i)} = n\,\hat \xi_i.
    $$
\end{proof}

For a star graph, it is easy to calculate the ksi-centrality and normalized ksi-centrality with their average values.

\begin{thm}\label{thm:star}
For the star graph with $n+1$ vertices
    $$
        \hat\xi_i = 1, \qquad \xi_i = \begin{cases}
            1, & \text{if $i$ central vertex,} \\
            n, & \text{otherwise,}
        \end{cases}
    $$
    $$
        \hat\Xi(G) = 1,\qquad \Xi(G) = \frac {n^2+1} {n+1}\sim n.
    $$
\end{thm}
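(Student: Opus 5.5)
The computation is direct from the definitions. I label the star graph with $n+1$ vertices by a center $c$ of degree $n$ and leaves $\ell_1,\dots,\ell_n$, each of degree $1$. The total number of vertices is therefore $n+1$, so the denominator of $\hat\xi_i$ is $d_i\big((n+1)-d_i\big)$. The plan is to compute $\big|E(\cN(i),V\setminus\cN(i))\big|$ separately for the center and for a leaf, read off $\xi_i$ and $\hat\xi_i$, and then average.

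\textbf{Center.} Here $\cN(c)$ is the set of all $n$ leaves and $V\setminus\cN(c)=\{c\}$. Every leaf has exactly one edge, and it goes to $c$, so the cut has exactly $n$ edges. This gives $\xi_c=n/n=1$ and $\hat\xi_c=n/(n\cdot 1)=1$.

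\textbf{Leaf.} Here $\cN(\ell_j)=\{c\}$ and $V\setminus\cN(\ell_j)$ consists of all leaves, including $\ell_j$ itself. The center is adjacent to all $n$ leaves, so the cut again has $n$ edges. This gives $\xi_{\ell_j}=n/1=n$ and $\hat\xi_{\ell_j}=n/(1\cdot n)=1$.

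\textbf{Averages.} Since $\hat\xi\equiv1$ on all vertices, $\hat\Xi(G)=1$. For the plain average,
\[
\Xi(G)=\frac{1\cdot 1+n\cdot n}{n+1}=\frac{n^2+1}{n+1},
\]
and since $\frac{n^2+1}{n+1}=n-1+\frac{2}{n+1}$, this is asymptotically $n$.

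There is no real obstacle in this proof. The only point needing care is that the normalization in $\hat\xi$ uses the total number of vertices, which is $n+1$ here, not the $n$ of the earlier definitions. One must also remember that $i$ itself lies in $V\setminus\cN(i)$, so edges from its neighbors back to $i$ are counted in the cut.
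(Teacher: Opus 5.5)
Your proof is correct and is the same direct computation from the definitions that the paper intends; the paper gives no written proof and only remarks that the values are easy to calculate. Your two cautions are exactly the points that make the formulas come out right: count the edges back to $i$ itself, which lies in $V\setminus\cN(i)$, and use $|V\setminus\cN(i)|$ with $|V|=n+1$ in the normalization.
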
   
    According to this theorem, the normalized ksi-centrality does not distinguish between the vertices of a star graph (as does the local clustering coefficient), and its values are equal to the value of an isolated vertex. However, for ksi-centrality, peripheral nodes are more important because they have a significant central neighbor. Furthermore, the average normalized ksi-coefficient is constant but the average ksi-coefficient tends to infinity as the number of nodes increases.

\begin{figure}[h!]
    \centering
	\includegraphics[width = 1\textwidth]{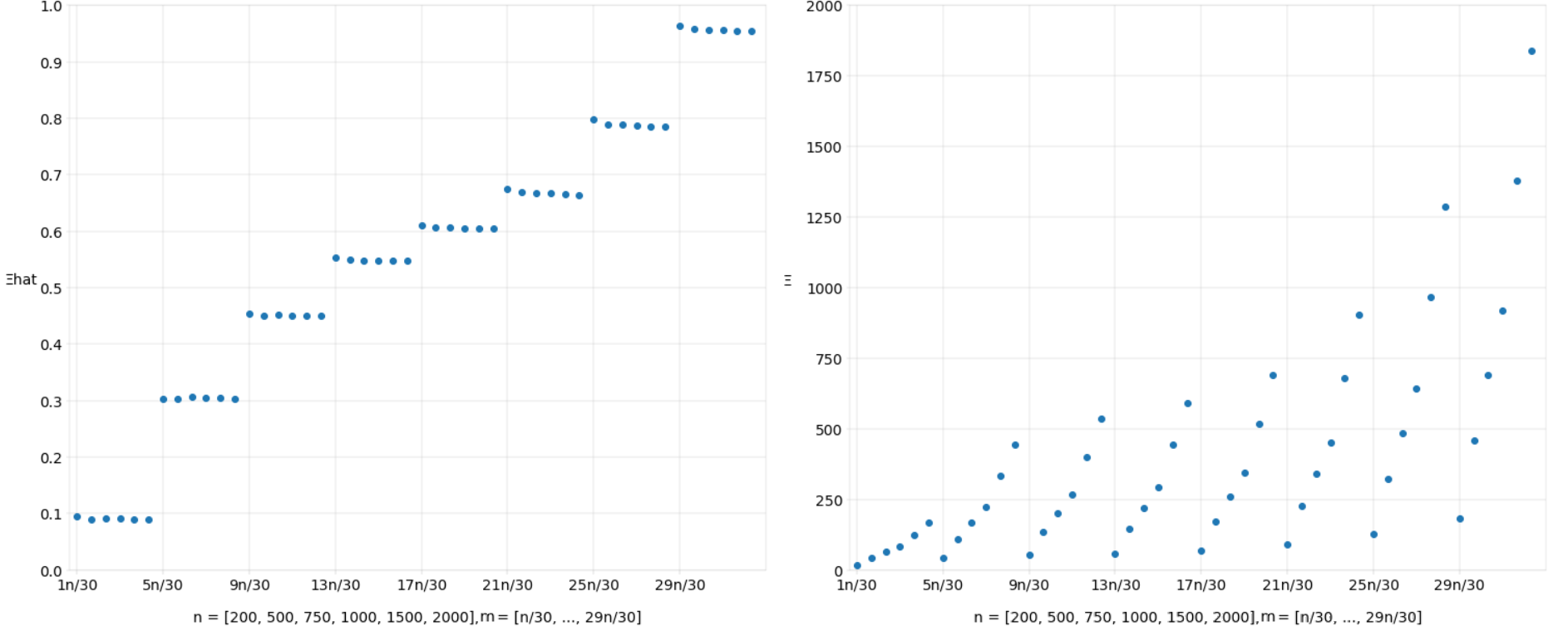}
	\caption{Normalized average ksi (left) and average ksi (right) coefficients for different parameters of Barabasi-Albert model. The average coefficients are combined into groups with an equal ratio $m/n$ for $n = 200, 500, 750, 1000, 1500, 2000$ and plotted with the same step for the corresponding value of the ratio.}
	\label{fig:ksirel}
\end{figure}

\begin{figure}[h!]
    \centering
	\includegraphics[height = 0.8\textheight, width = 1.0\textwidth]{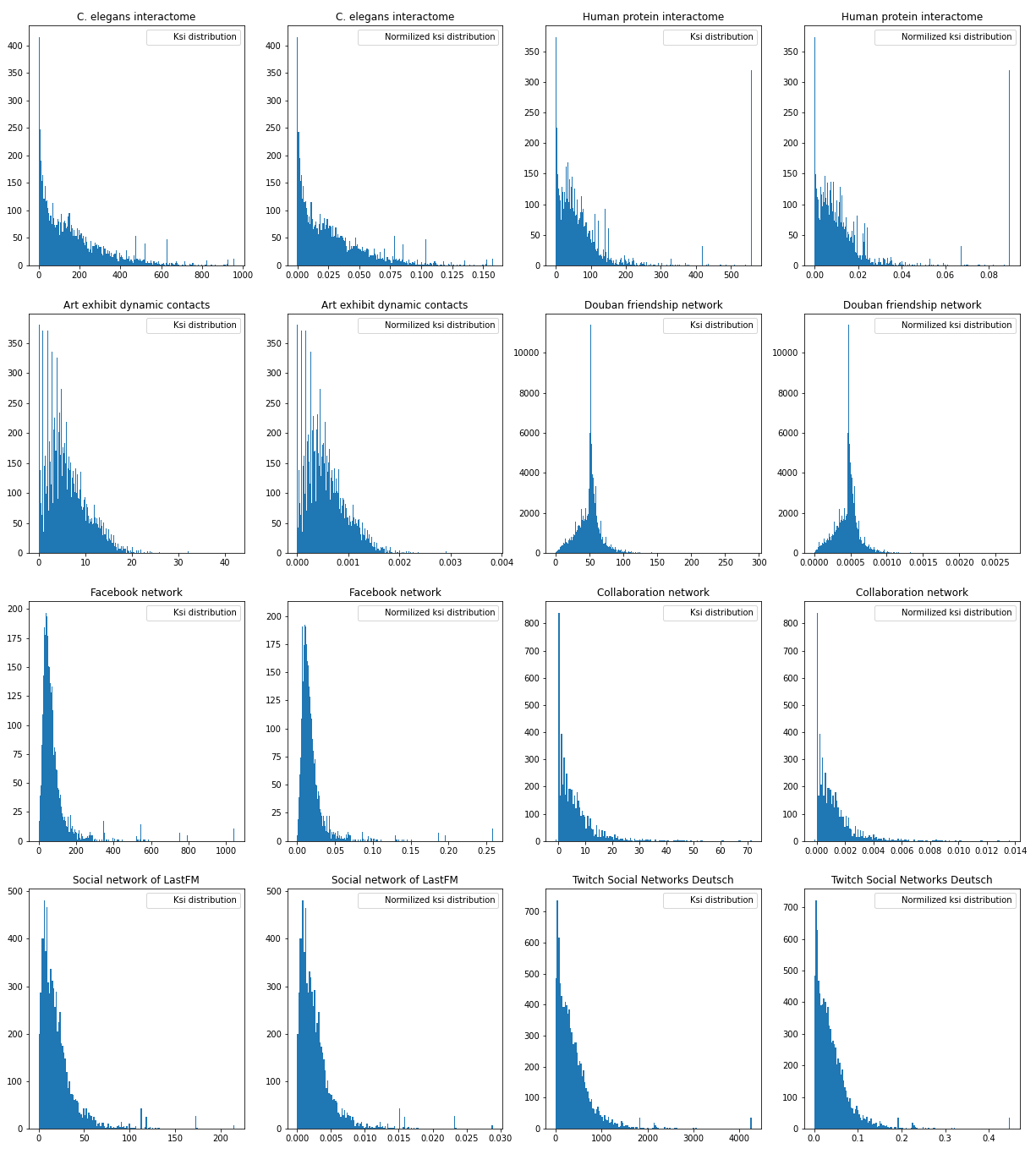}
	\caption{Ksi (left) and normalized ksi  (right) distributions for different networks. The x-axis shows the ksi values, and the y-axis shows the number of vertices.}
	\label{fig:ksiVSksi}
\end{figure}

\begin{table}[h!]
\resizebox{\textwidth}{!}{
\begin{tabular}{|c|c|}
     \hline
     Network name & Internet link \\
     \hline
    C. elegans interactome	\cite{bib1} &	\url{https://networks.skewed.de/net/celegans_interactomes} \\
Human protein interactome	\cite{bib2}&	\url{https://networks.skewed.de/net/reactome}	\\
Art exhibit dynamic contacts \cite{bib3}	&	\url{https://networks.skewed.de/net/sp_infectious}	\\
Douban friendship network	\cite{bib4}&	\url{https://networks.skewed.de/net/douban}	\\
Facebook network	\cite{bib5}&	\url{https://snap.stanford.edu/data/ego-Facebook.html}	\\
Collaboration network	\cite{bib6}&	\url{https://snap.stanford.edu/data/ca-GrQc.html}	\\
Social network of LastFM	\cite{bib7}&	\url{https://snap.stanford.edu/data/feather-lastfm-social.html}	\\
Twitch Social Networks Deutsch	\cite{bib8}&	\url{https://snap.stanford.edu/data/twitch-social-networks.html}	\\
Infrastructure Network openflights	\cite{bib9}&	\url{https://networkrepository.com/inf-openflights.php}	\\
Social Network advogato	\cite{bib9}&	\url{https://networkrepository.com/soc-advogato.php}	\\
Web Graph EPA	\cite{bib9}&	\url{https://networkrepository.com/web-EPA.php}	\\
Web Graph spam	\cite{bib9}&	\url{https://networkrepository.com/web-spam.php}	\\
Gemsec Deezer dataset Croatia 	\cite{bib10}&	\url{https://snap.stanford.edu/data/gemsec-Deezer.html}	\\
Gemsec Deezer dataset Hungary	\cite{bib10}&	\url{https://snap.stanford.edu/data/gemsec-Deezer.html}	\\
Gemsec Deezer dataset Romania	\cite{bib10}&	\url{https://snap.stanford.edu/data/gemsec-Deezer.html}	\\
Gemsec Facebook dataset artist	\cite{bib10}&	\url{https://snap.stanford.edu/data/gemsec-Facebook.html}	\\
Gemsec Facebook dataset athletes	\cite{bib10}&	\url{https://snap.stanford.edu/data/gemsec-Facebook.html}	\\
Gemsec Facebook dataset company	\cite{bib10}&	\url{https://snap.stanford.edu/data/gemsec-Facebook.html}	\\
Gemsec Facebook dataset government	\cite{bib10}&	\url{https://snap.stanford.edu/data/gemsec-Facebook.html}	\\
Gemsec Facebook dataset new sites	\cite{bib10}&	\url{https://snap.stanford.edu/data/gemsec-Facebook.html}	\\
Gemsec Facebook dataset politician	\cite{bib10}&	\url{https://snap.stanford.edu/data/gemsec-Facebook.html}	\\
Gemsec Facebook dataset public figure	\cite{bib10}&	\url{https://snap.stanford.edu/data/gemsec-Facebook.html	}\\
Gemsec Facebook dataset tvshow	\cite{bib10}&	\url{https://snap.stanford.edu/data/gemsec-Facebook.html}	\\
DBLP collaboration network	\cite{bib11}&	\url{https://snap.stanford.edu/data/com-DBLP.html}	\\
Gowalla location based online social	\cite{bib12}&	\url{https://snap.stanford.edu/data/loc-Gowalla.html}	\\
Brightkite location based online social	\cite{bib12}&	\url{https://snap.stanford.edu/data/loc-Brightkite.html}	\\
Amazon product network	\cite{bib11}&	\url{https://snap.stanford.edu/data/com-Amazon.html} \\
Email communication from Enron	\cite{bib13}&	\url{https://snap.stanford.edu/data/email-Enron.html}	\\
Arxiv High Energy paper citation	\cite{bib14}&	\url{https://snap.stanford.edu/data/cit-HepPh.html}	\\
Biological Network grid-human	\cite{bib9}&	\url{https://networkrepository.com/bio-grid-human.php}	\\
Vidal human interactome	\cite{bib15}&	\url{https://networks.skewed.de/net/interactome_vidal}	\\
Brain Network fly-drosophila-medulla	\cite{bib9}&	\url{https://networkrepository.com/bn-fly-drosophila-medulla-1.php}	\\
Protein interactomes across tree of life	\cite{bib16}&	\url{https://networks.skewed.de/net/tree-of-life}	\\
Marvel Universe social network	\cite{bib17}&	\url{https://networks.skewed.de/net/marvel_universe}	\\
Global nematode-mammal interactions	\cite{bib18}&	\url{https://networks.skewed.de/net/nematode_mammal}	\\
Internet Autonomous Systems graph	\cite{bib19}&	\url{https://networks.skewed.de/net/topology} \\
Scientific collaborations in physics	\cite{bib20}&	\url{https://networks.skewed.de/net/arxiv_collab}	\\
EU national procurement FR 2011	\cite{bib21}&	\url{https://networks.skewed.de/net/eu_procurements_alt}	\\
Stack Overflow favorites	\cite{bib22}&	\url{https://networks.skewed.de/net/stackoverflow}	\\
WordNet relationships	\cite{bib23}&	\url{https://networks.skewed.de/net/wordnet}	\\
\hline
\end{tabular}}

\caption{List of used networks with references.}
        \label{tab:net}
    \end{table}

\begin{figure}[h!]
    \centering
	\includegraphics[height = 0.8\textheight, width = 1\textwidth]{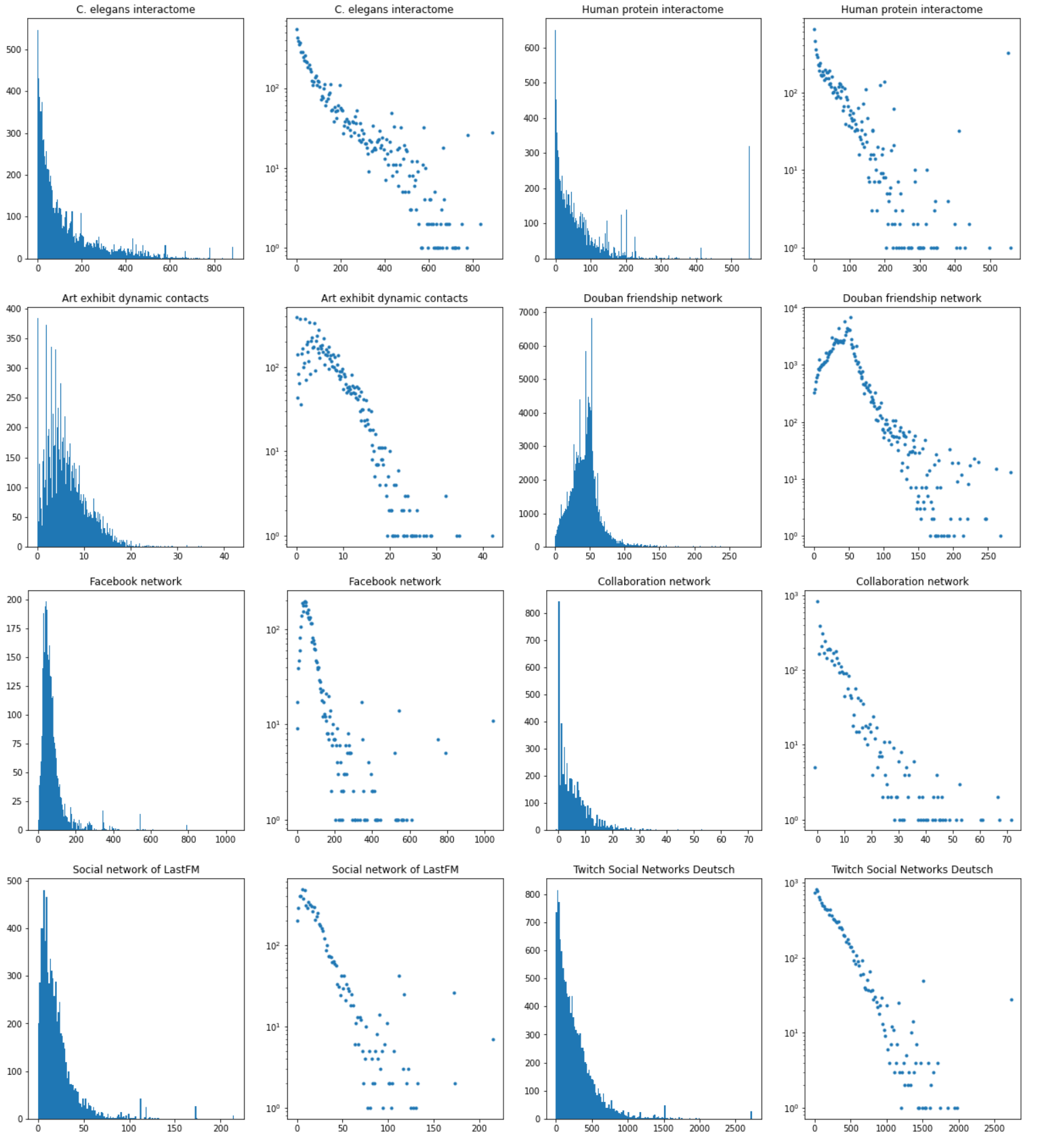}
	\caption{Ksi-distributions (left) and corresponding log-scale for y axes (right) for networks from table~\ref{tab:net}. The x-axis shows the ksi values, and the y-axis shows the number of vertices.}
	\label{fig:1}
\end{figure}

\begin{figure}[h!]
    \centering
	\includegraphics[height = 0.8\textheight, width = 1\textwidth]{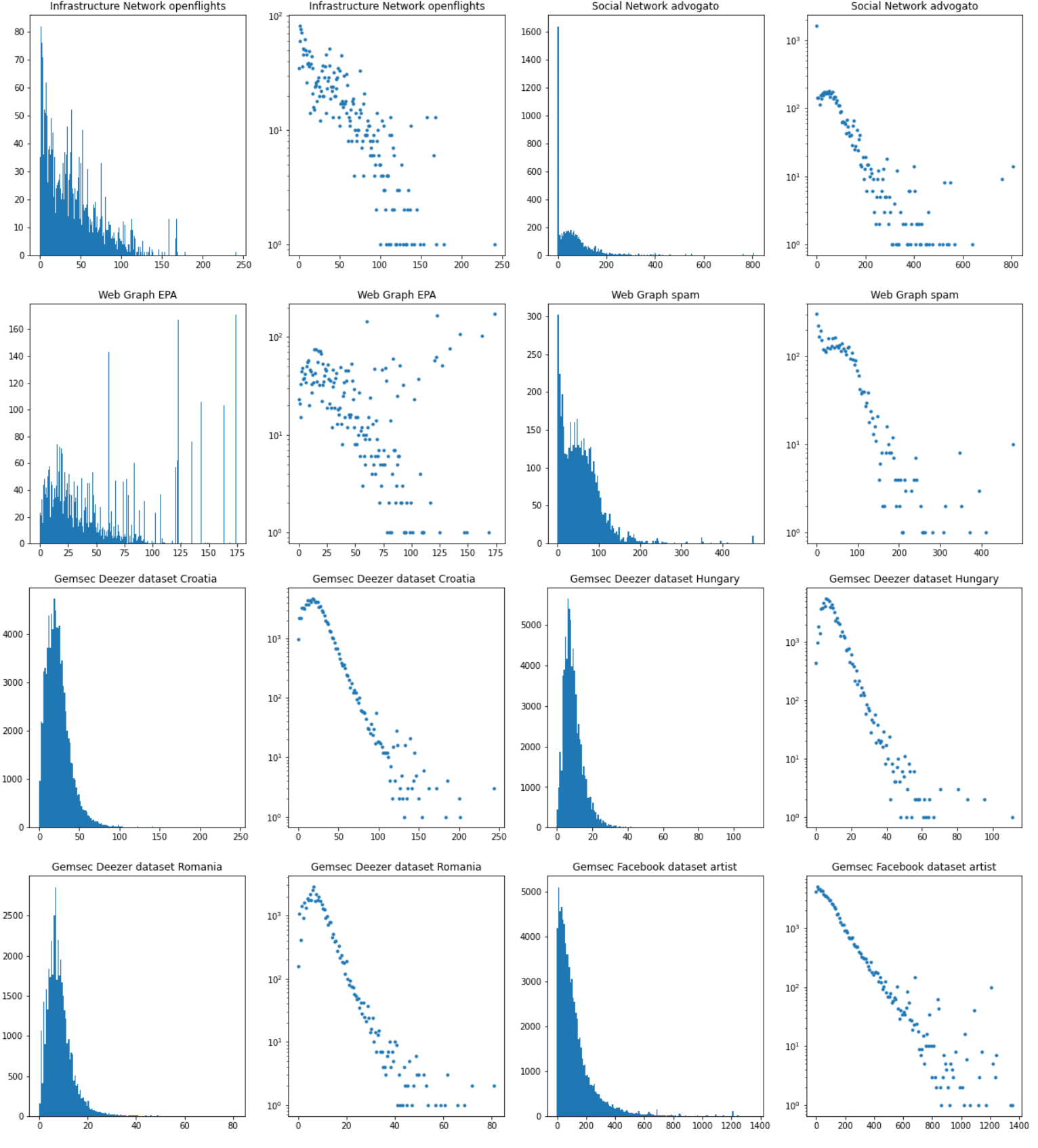}
	\caption{Ksi-distributions (left) and corresponding log-scale for y axes (right) for networks from table~\ref{tab:net}. The x-axis shows the ksi values, and the y-axis shows the number of vertices.}
	\label{fig:2}
\end{figure}

\begin{figure}[h!]
    \centering
	\includegraphics[height = 0.8\textheight, width = 1\textwidth]{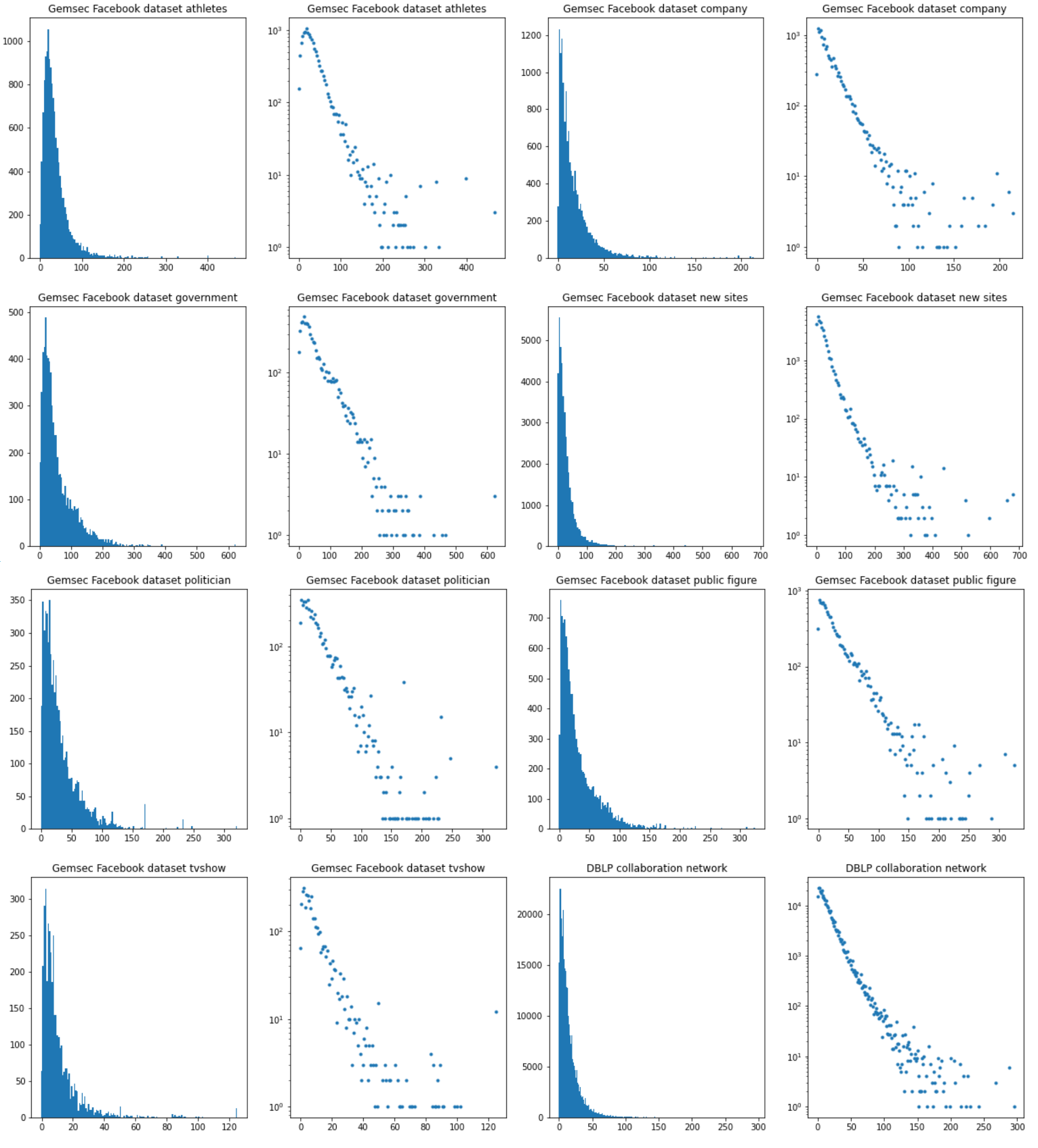}
	\caption{Ksi-distributions (left) and corresponding log-scale for y axes (right) for networks from table~\ref{tab:net}. The x-axis shows the ksi values, and the y-axis shows the number of vertices.}
	\label{fig:3}
\end{figure}

\begin{figure}[h!]
    \centering
	\includegraphics[height = 0.8\textheight, width = 1\textwidth]{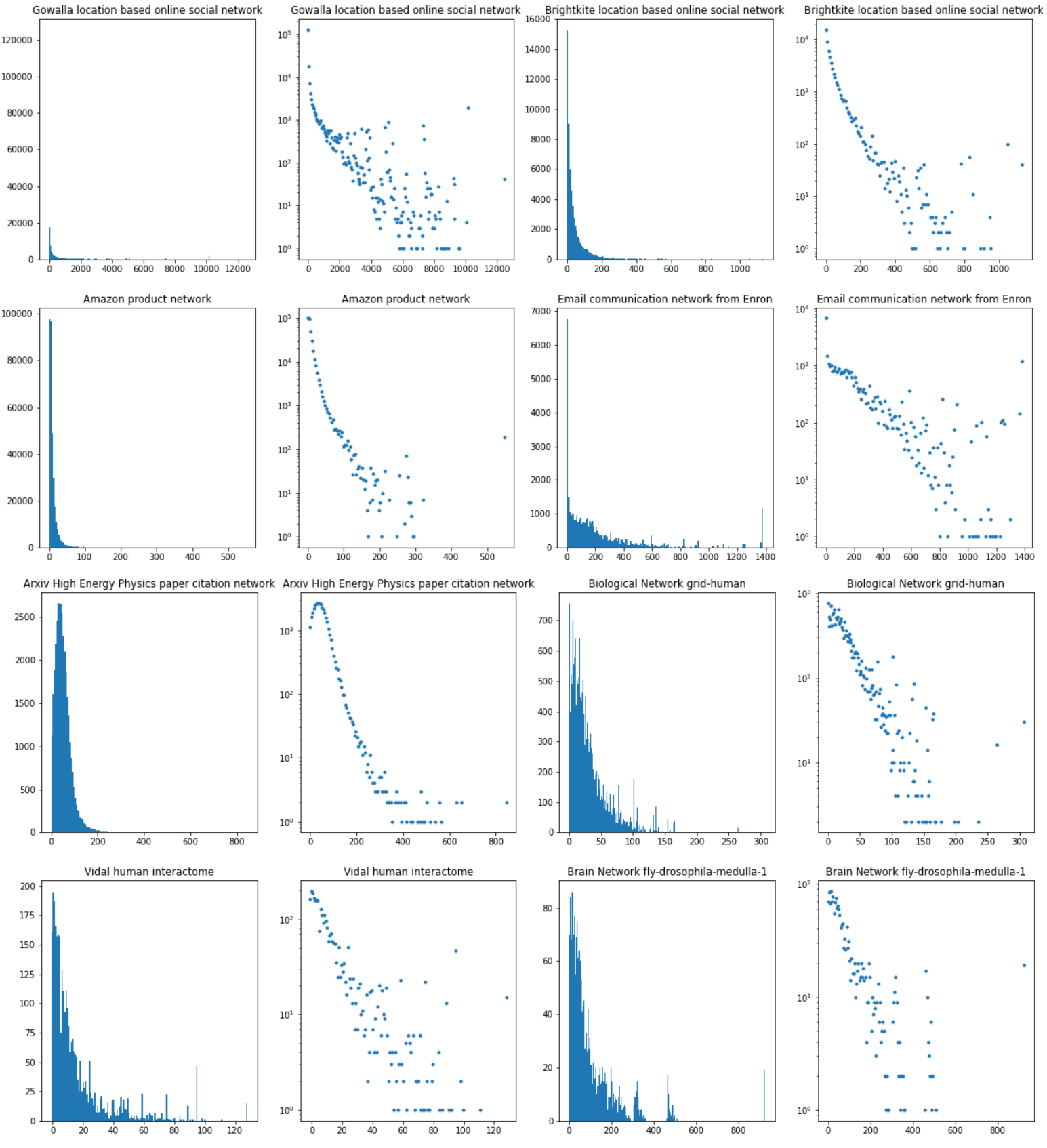}
	\caption{Ksi-distributions (left) and corresponding log-scale for y axes (right) for networks from table~\ref{tab:net}. The x-axis shows the ksi values, and the y-axis shows the number of vertices.}
	\label{fig:4}
\end{figure}

\begin{figure}[h!]
    \centering
	\includegraphics[height = 0.8\textheight, width = 1\textwidth]{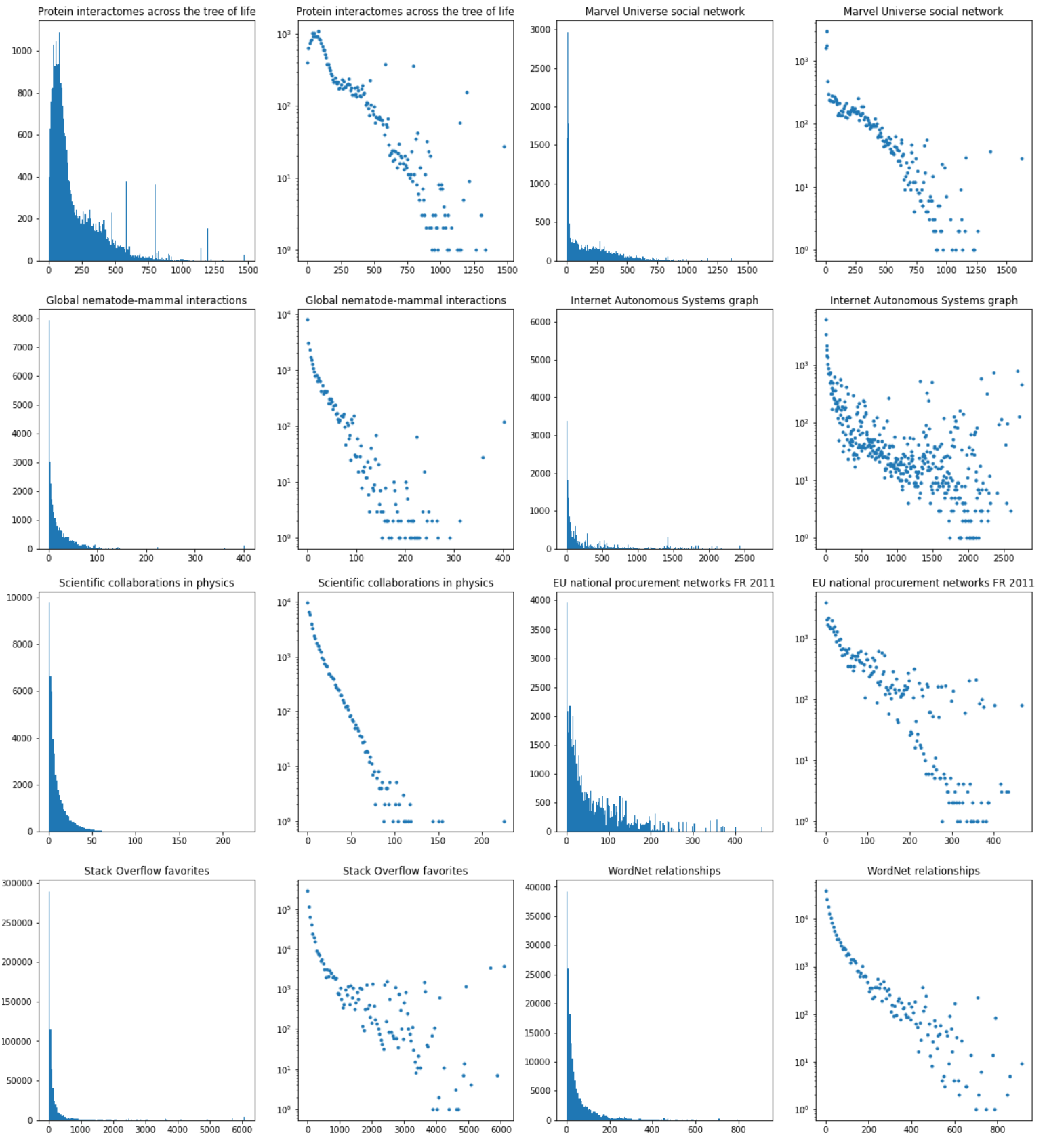}
	\caption{Ksi-distributions (left) and corresponding log-scale for y axes (right) for networks from table~\ref{tab:net}. The x-axis shows the ksi values, and the y-axis shows the number of vertices.}
	\label{fig:5}
\end{figure}

\begin{figure}[h!]
    \centering
	\includegraphics[height = 0.8\textheight, width = 1\textwidth]{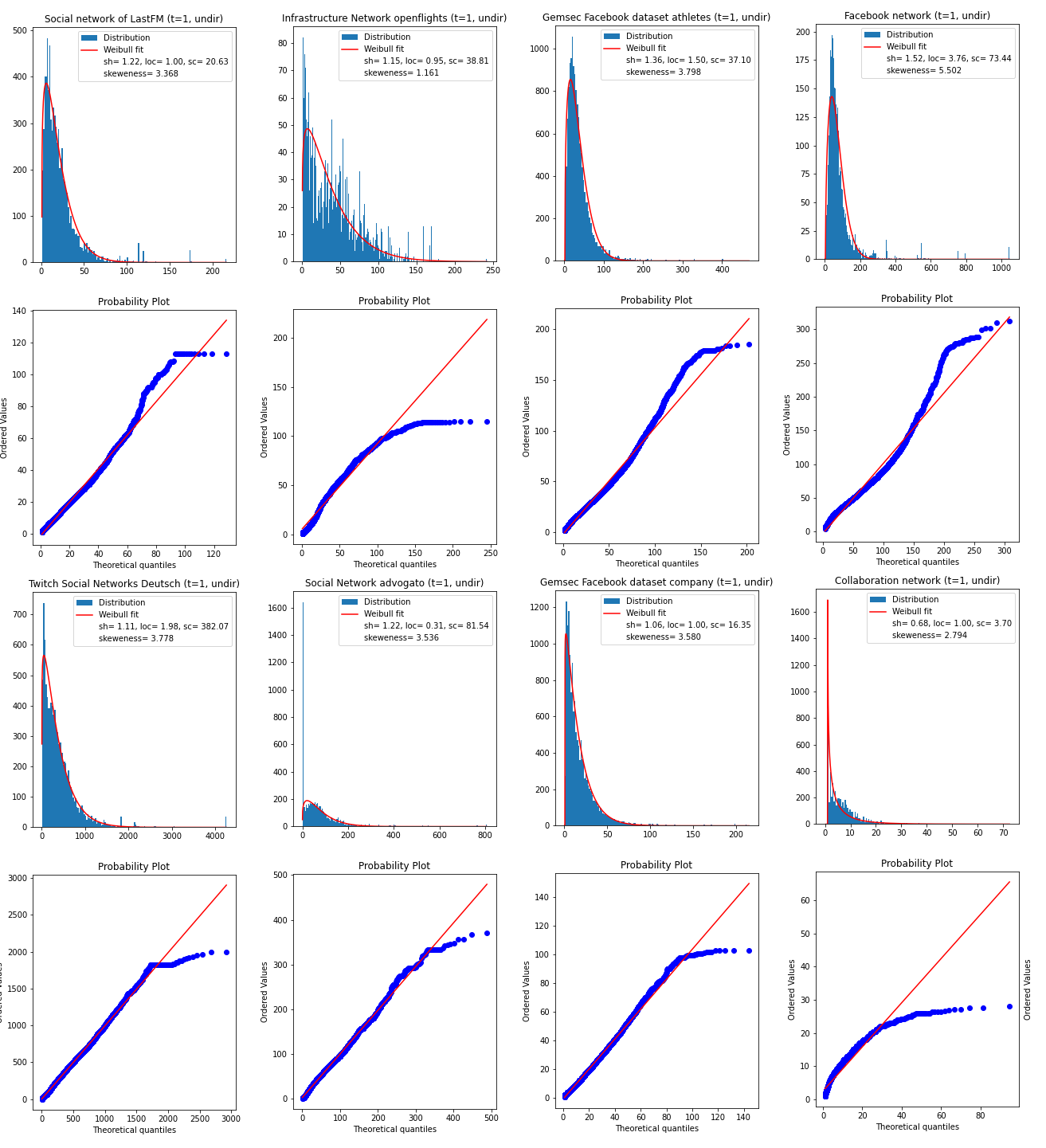}
	\caption{Ksi-distributions fitted with Weibull distribution, their Pearson's moment coefficients of skewness (top) and corresponded QQ-plots (bottom) for networks from the table~\ref{tab:net}. The x-axis shows the ksi values, and the y-axis shows the number of vertices for distributions.}
	\label{fig:wb}
\end{figure}

\begin{figure}[h!]
    \centering
	\includegraphics[width = 1\textwidth]{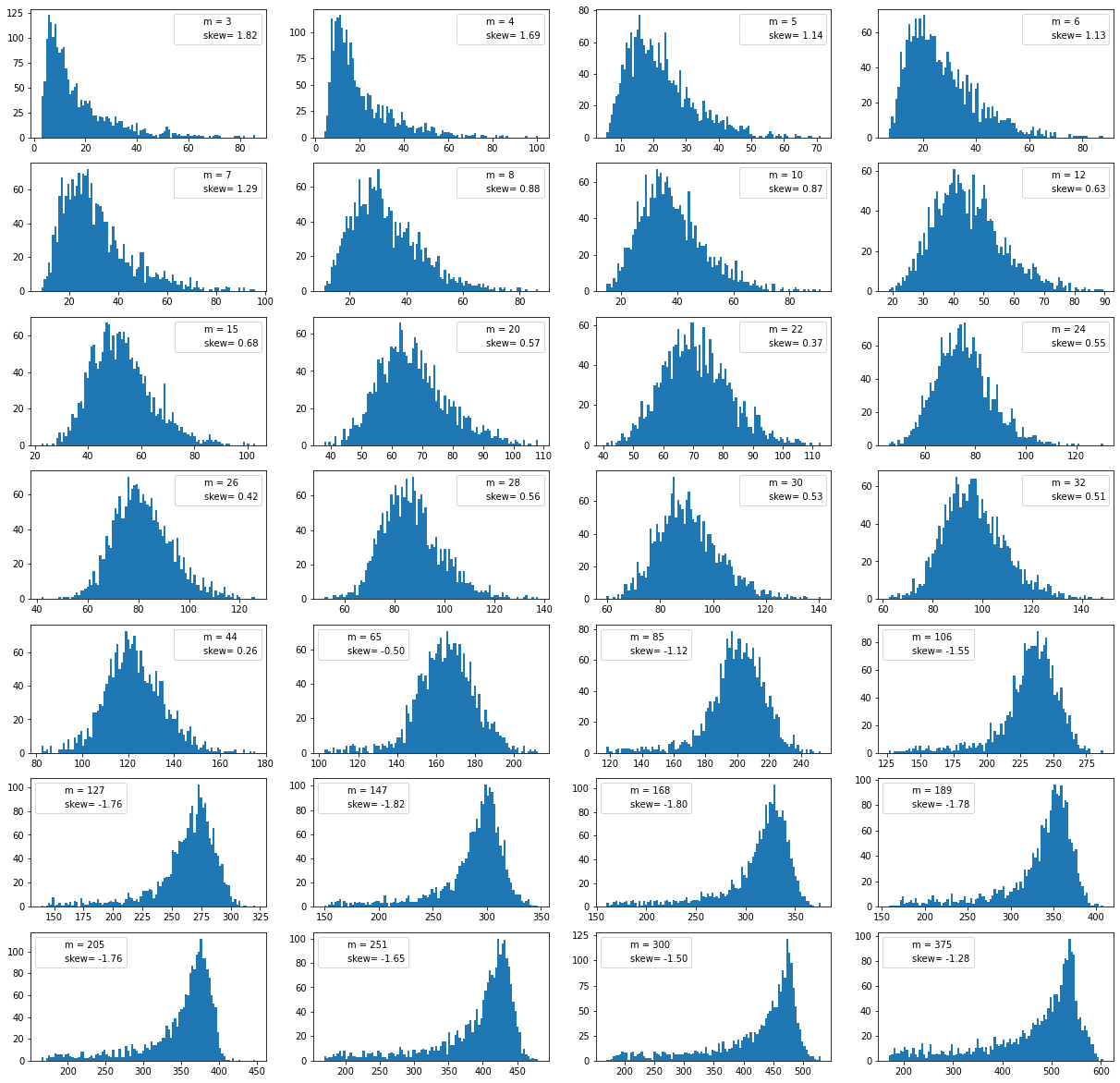}
	\caption{Dependence of the ksi-distribution and its Pearson’s moment coefficients of skewness on the parameter $m$ for the Barabasi-Albert model $n=2000$. The x-axis shows the ksi values, and the y-axis shows the number of vertices for distributions.}
	\label{fig:BA}
\end{figure}

\begin{figure}[h!]
    \centering
	\includegraphics[width = 1\textwidth]{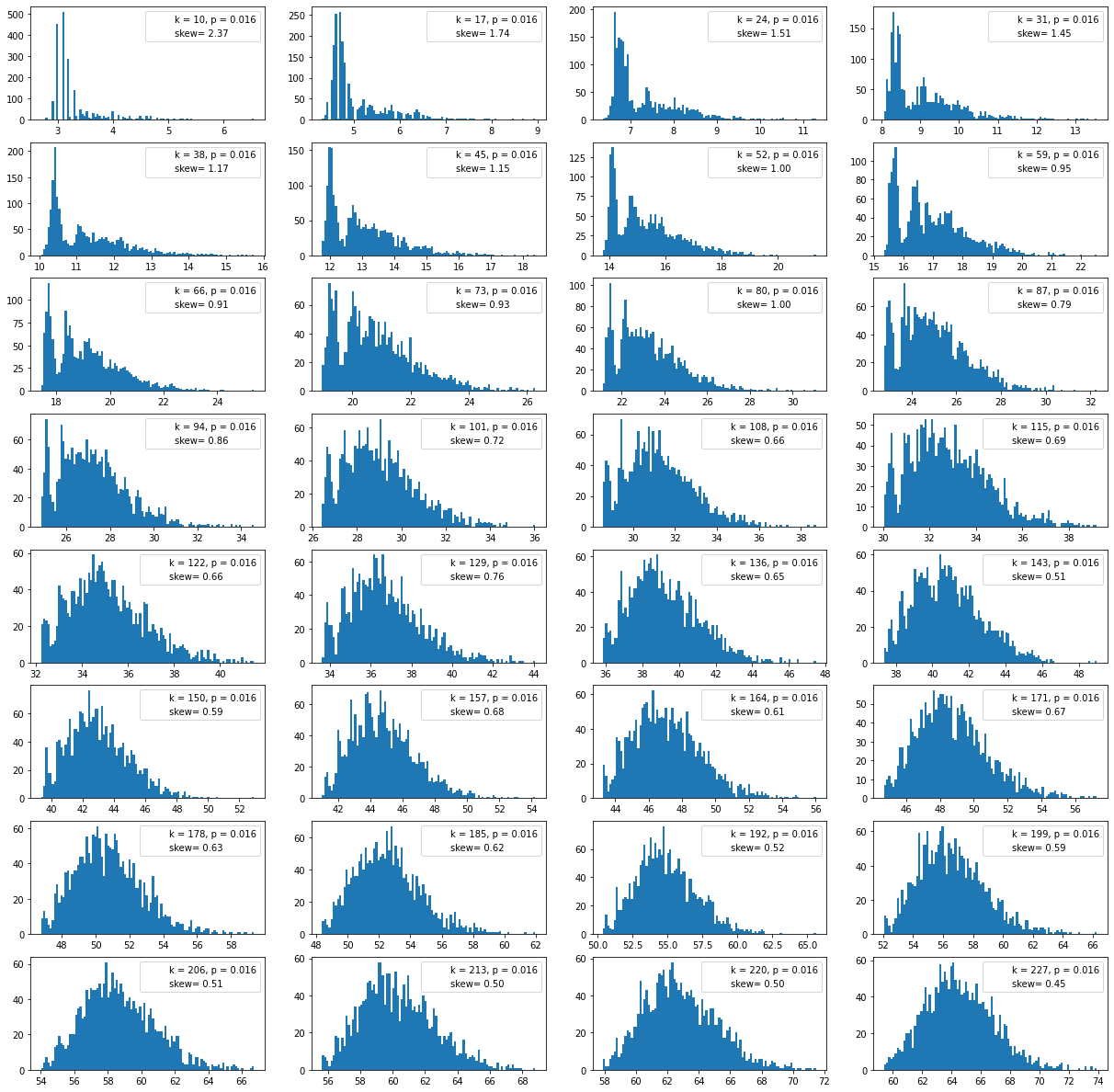}
	\caption{Dependence of the ksi-distribution and its Pearson’s moment coefficients of skewness on the parameter $k$ for the Watts-Strogatz model $n=2000, p=0.016$. The x-axis shows the ksi values, and the y-axis shows the number of vertices for distributions.}
	\label{fig:WS1}
\end{figure}

\begin{figure}[h!]
    \centering
	\includegraphics[width = 1\textwidth]{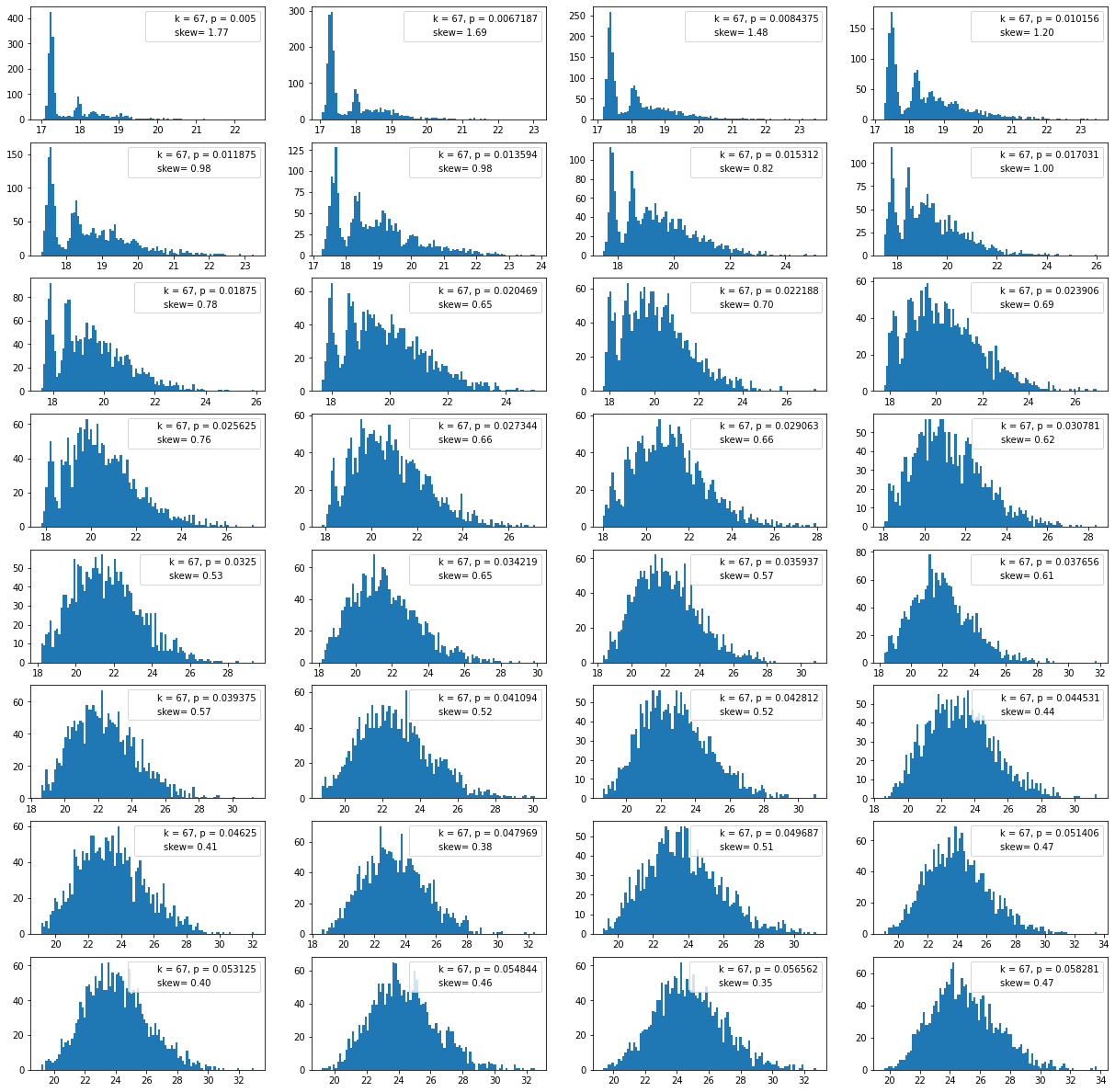}
	\caption{Dependence of the ksi-distribution and its Pearson’s moment coefficients of skewness on the parameter $p$ for the Watts-Strogatz model $n=2000,k=67$. The x-axis shows the ksi values, and the y-axis shows the number of vertices for distributions.}
	\label{fig:WS2}
\end{figure}

\begin{figure}[h!]
    \centering
	\includegraphics[width = 0.6\textwidth]{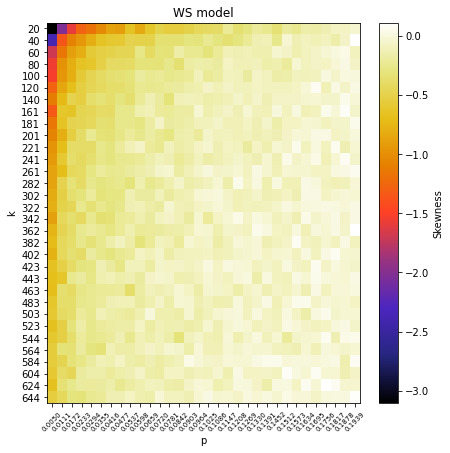}
	\caption{Pearson’s moment coefficients of skewness for the ksi-distribution for the Watts-Strogatz model $n=2000, k, p$.}
	\label{fig:WS2}
\end{figure}

\begin{figure}[h!]
    \centering
	\includegraphics[width = 1\textwidth]{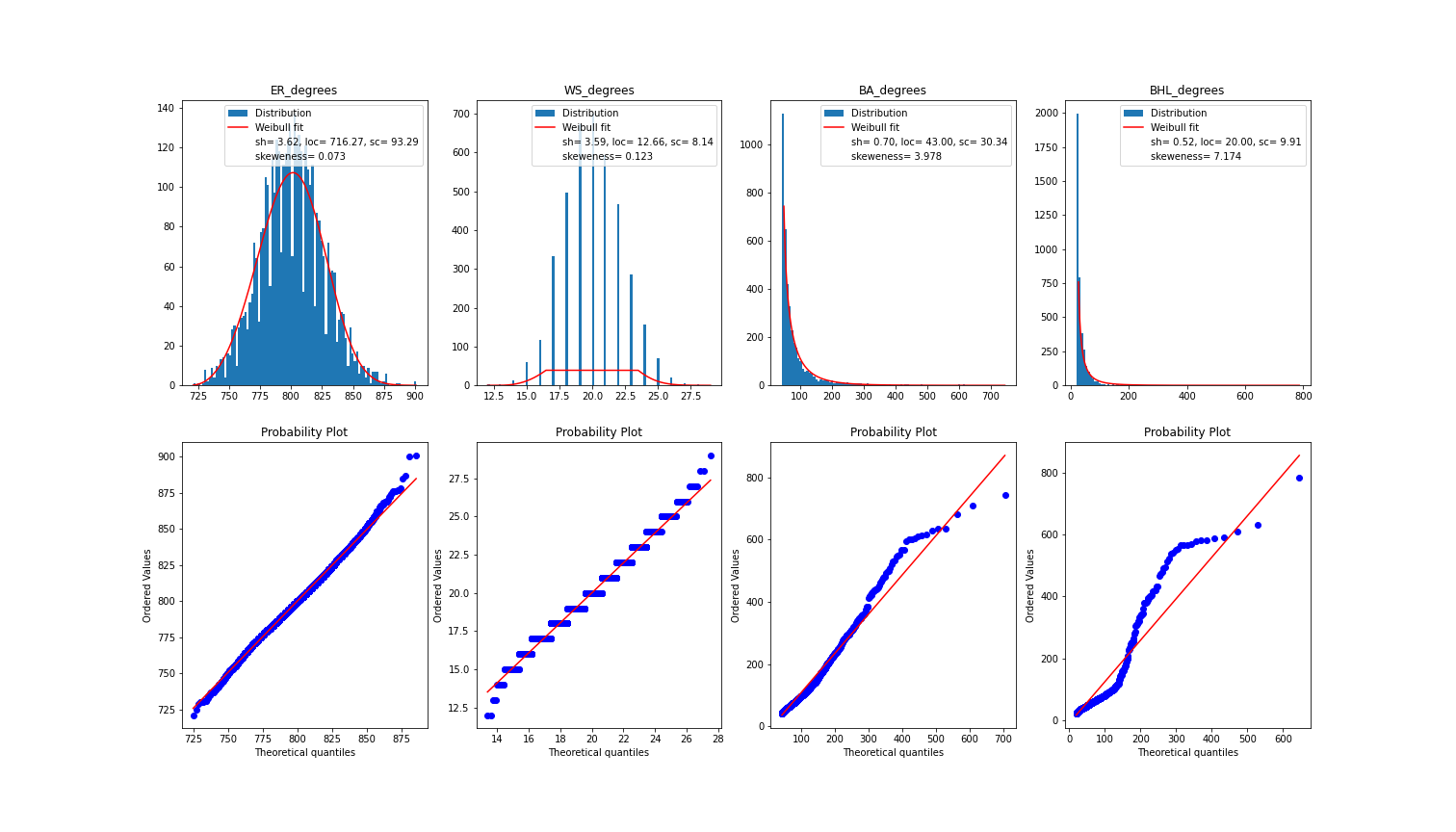}
	\caption{Degree distributions fitted with Weibull distribution, their Pearson's moment coefficients of skewness (top) and corresponded QQ-plots (bottom) for networks from models Erdos-Renyi $(n=4000, p=0.2)$, Watts-Strogatz $(n=4000, k=20, p=0.3)$, Barabasi-Albert $(n=4000, m=43)$ and Boccaletti-Hwang-Latora $(n_0=100,  m=20, n=4000)$. The x-axis shows degree values, and the y-axis shows the number of vertices for distributions.}
	\label{fig:weib_degs}
\end{figure}

\begin{figure}[h!]
    \centering
	\includegraphics[height = 0.8\textheight, width = 1\textwidth]{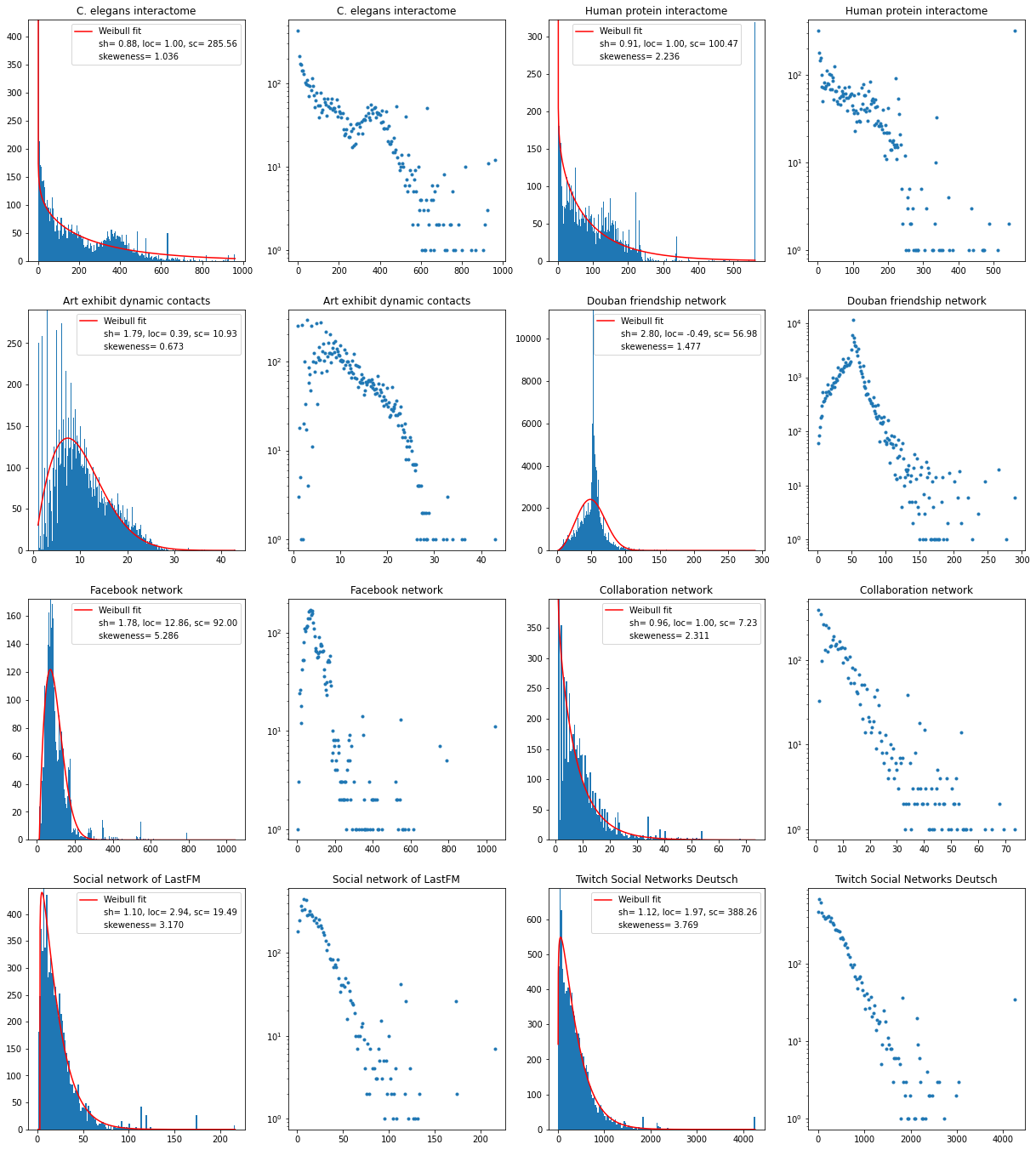}
	\caption{Average neighbor degree distributions (left) and corresponding log-scale for y axes (right) for networks from table~\ref{tab:net}. The x-axis shows the average neighbor degree values, and the y-axis shows the number of vertices.}
	\label{fig:ad1}
\end{figure}

\begin{figure}[h!]
    \centering
	\includegraphics[height = 0.8\textheight, width = 1\textwidth]{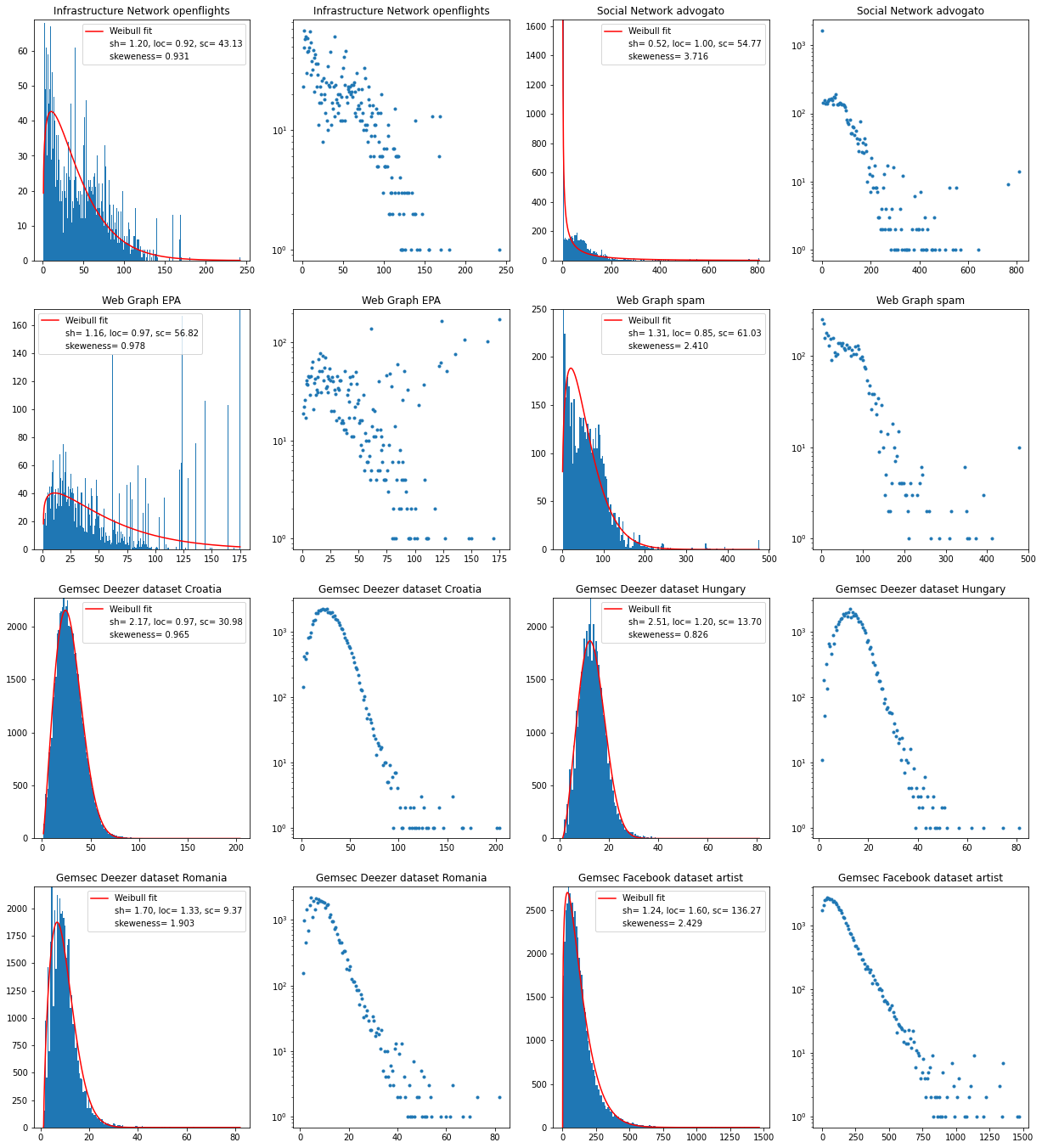}
	\caption{Average neighbor degree distributions (left) and corresponding log-scale for y axes (right) for networks from table~\ref{tab:net}. The x-axis shows the average neighbor degree values, and the y-axis shows the number of vertices.}
\end{figure}

\begin{figure}[h!]
    \centering
	\includegraphics[height = 0.8\textheight, width = 1\textwidth]{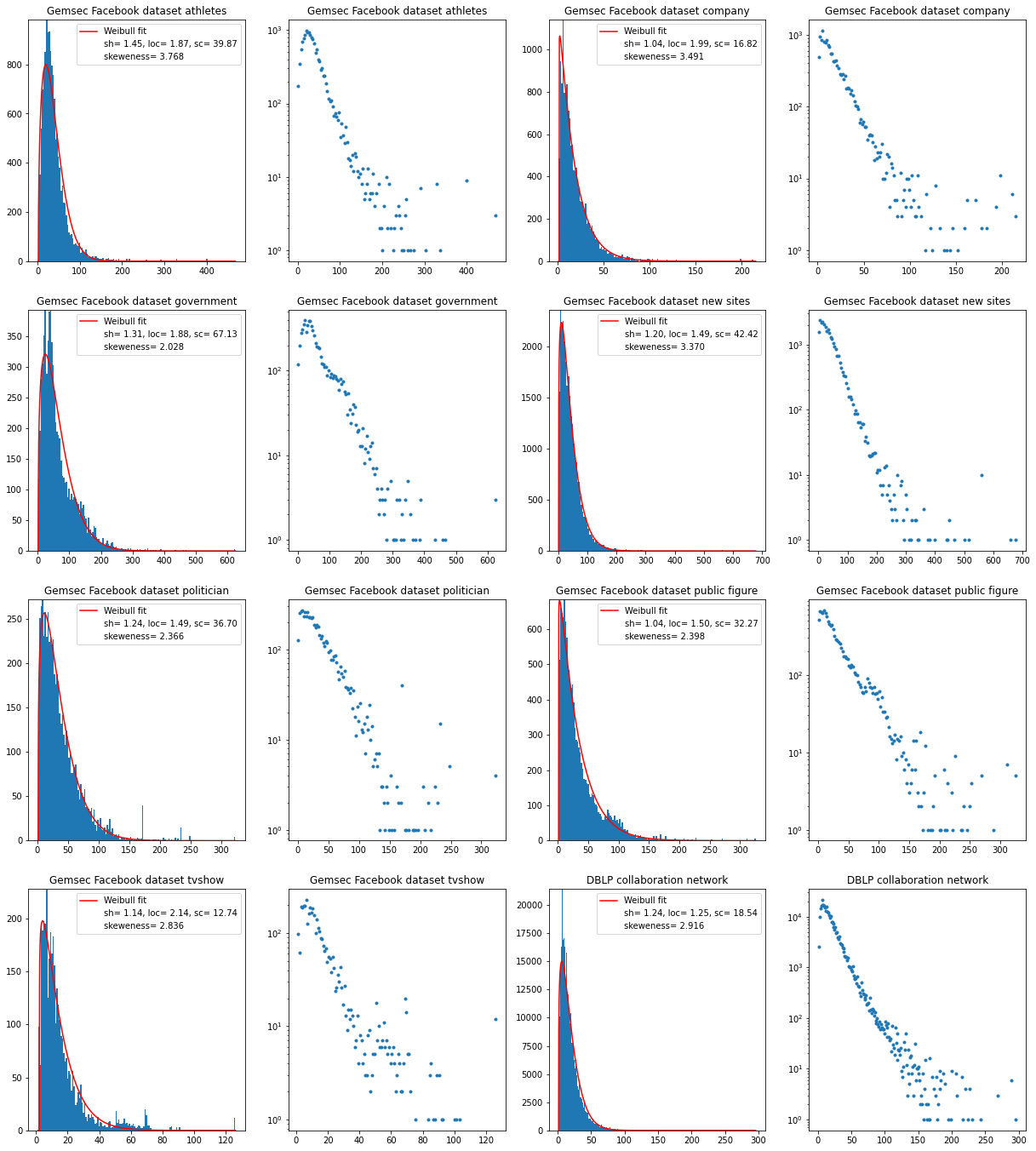}
	\caption{Average neighbor degree distributions (left) and corresponding log-scale for y axes (right) for networks from table~\ref{tab:net}. The x-axis shows the average neighbor degree values, and the y-axis shows the number of vertices.}
\end{figure}

\begin{figure}[h!]
    \centering
	\includegraphics[height = 0.8\textheight, width = 1\textwidth]{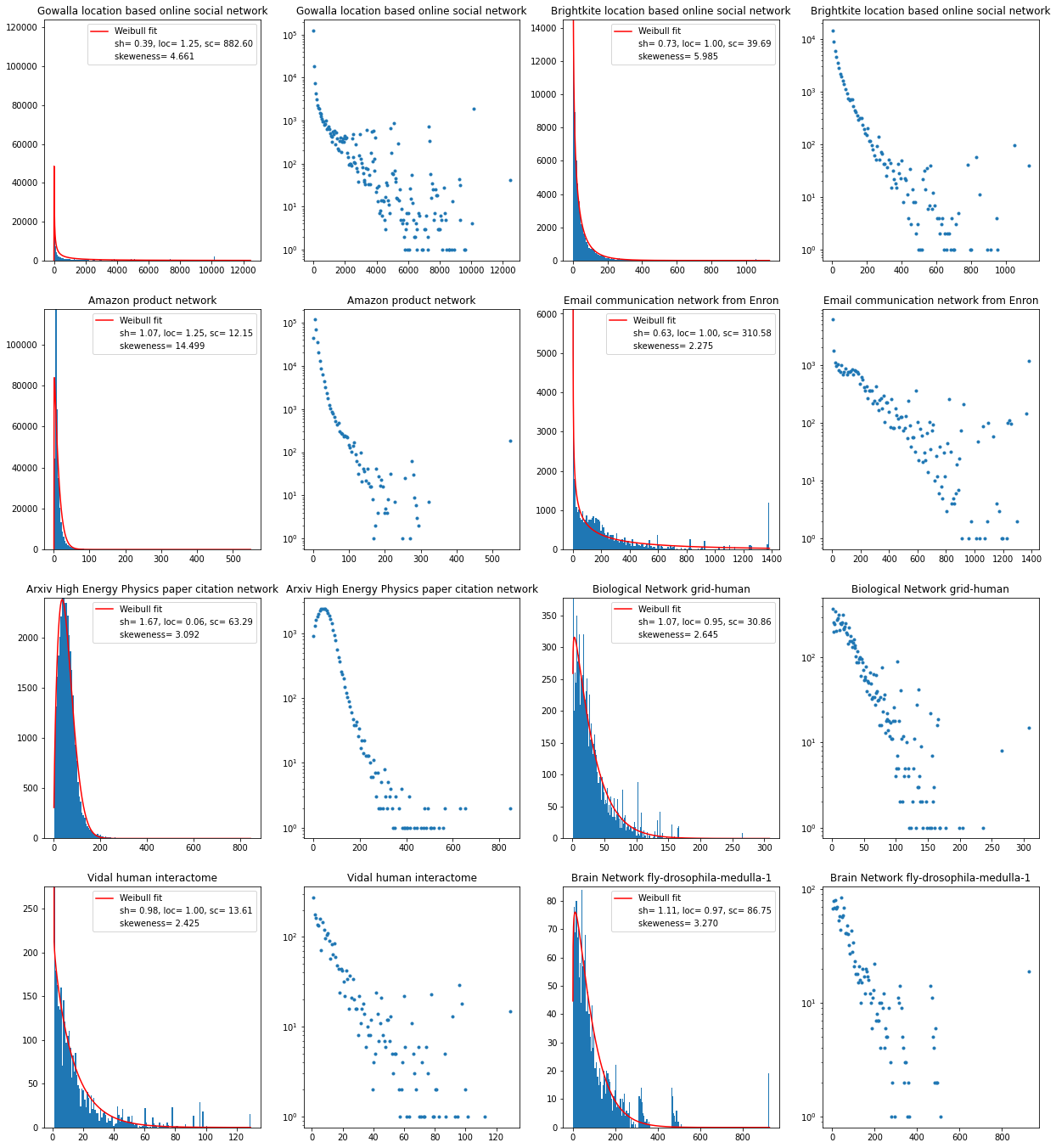}
	\caption{Average neighbor degree distributions (left) and corresponding log-scale for y axes (right) for networks from table~\ref{tab:net}. The x-axis shows the average neighbor degree values, and the y-axis shows the number of vertices.}
\end{figure}

\begin{figure}[h!]
    \centering
	\includegraphics[height = 0.8\textheight, width = 1\textwidth]{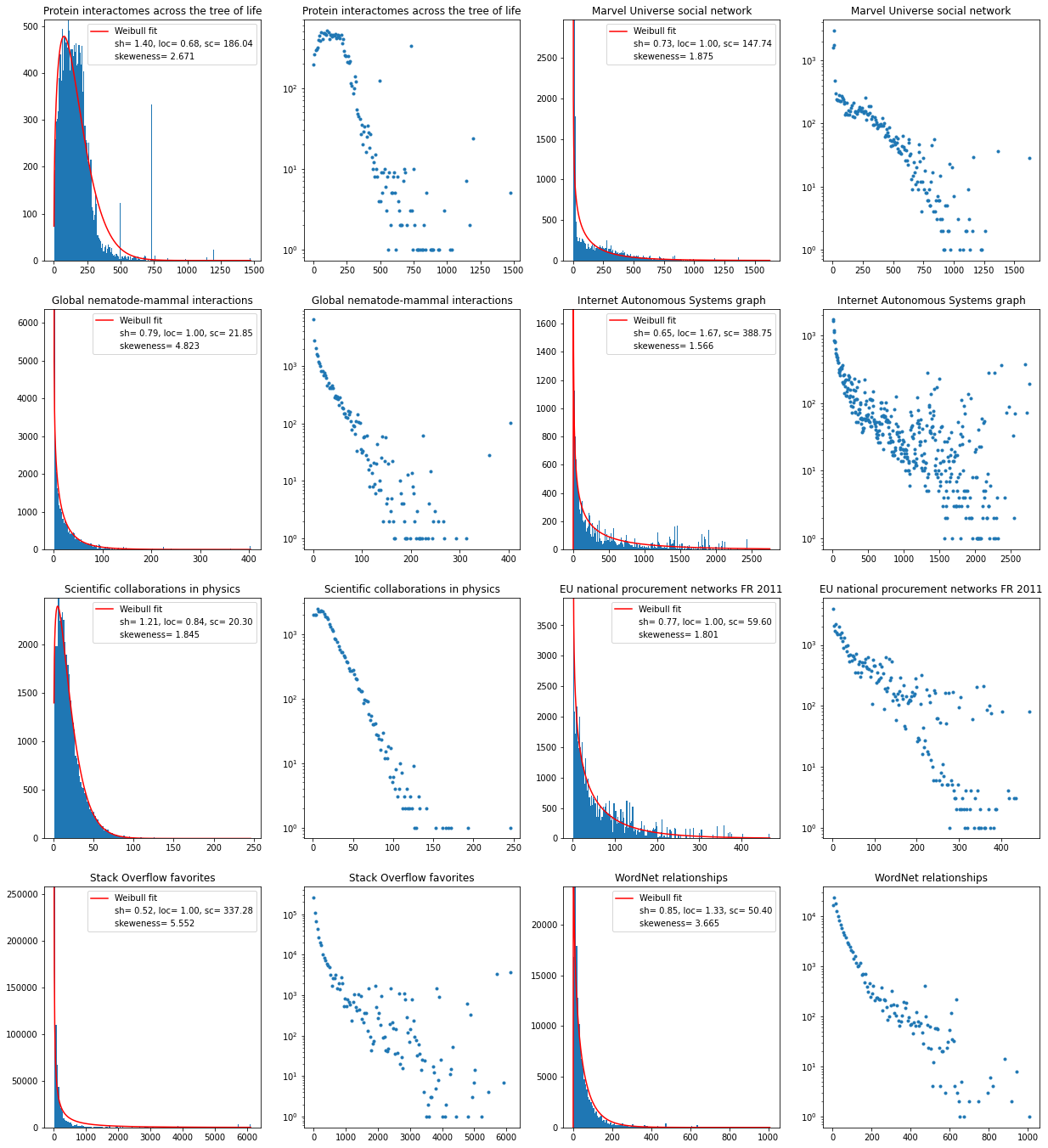}
	\caption{Average neighbor degree distributions (left) and corresponding log-scale for y axes (right) for networks from table~\ref{tab:net}. The x-axis shows the average neighbor degree values, and the y-axis shows the number of vertices.}
    \label{fig:ad5}
\end{figure}

\begin{figure}[h!]
    \centering
	\includegraphics[height = 0.8\textheight, width = 1\textwidth]{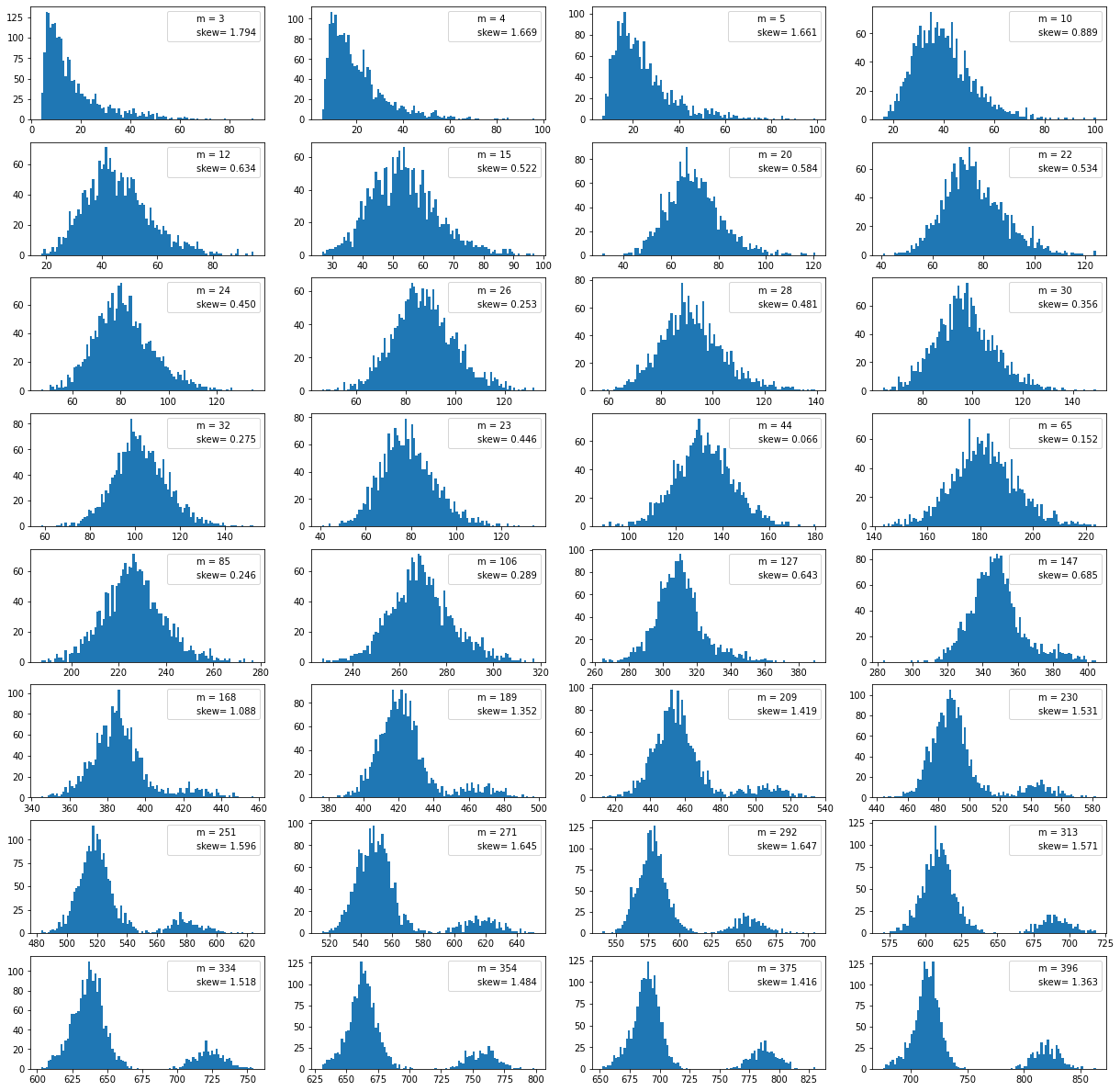}
	\caption{Dependence of the average neighbor degree  distribution and its Pearson’s moment coefficients of skewness on the parameter $m$ for the Barabasi-Albert model $n=2000$. The x-axis shows the average neighbor degree values, and the y-axis shows the number of vertices for distributions.}
    \label{fig:adex}
\end{figure}

\end{document}